\documentclass[journal,twoside,web]{ieeecolor}
\usepackage{generic}
\usepackage{cite}
\usepackage{amsmath,amssymb,amsfonts}
\usepackage{algorithmic}
\usepackage{graphicx}
\usepackage{textcomp}
\usepackage{subcaption}

\usepackage{epsfig} 
\usepackage{cite}

\usepackage{float}
\newtheorem{theorem}{Theorem}[section]
\newtheorem{corollary}{Corollary}[section]
\newtheorem{lemma}{Lemma}[section]

\newtheorem{example}{Example}[section]
\newtheorem{proposition}{Proposition}[section]

\renewcommand{\thesection}{\Roman{section}}

\def\BibTeX{{\rm B\kern-.05em{\sc i\kern-.025em b}\kern-.08em
    T\kern-.1667em\lower.7ex\hbox{E}\kern-.125emX}}
\title{Coordination Games on Multiplex Networks: Consensus, Convergence, and Stability of Opinion Dynamics}

\author{
Ruey-An Shiu$^{a}$ and 
Parinaz Naghizadeh$^{b}$
\thanks{This work is supported in part by the NSF under award CCF-2416311.}
\thanks{$^{a}$ Department of Mathematics, National Taiwan University; \texttt{b10201034@ntu.edu.tw}. $^{b}$  Department of Electrical and Computer Engineering, University of California, San Diego; \texttt{parinaz@ucsd.edu}.}
}

\begin{document}

\maketitle
\thispagestyle{empty}

\begin{abstract}
This paper studies opinion dynamics in multilayer networks. Extending a single-layer model, we formulate opinion updates as a synchronous coordination game in which agents minimize a local cost to stay close to their neighbors' opinions. We propose two coupling mechanisms between layers: (i) a merged model that aggregates layers through weighted influences, and (ii) a switching model that periodically alternates across layers. Using random-walk and spectral analysis, we derive sufficient conditions for consensus, characterize convergence rates, and analyze stability under network perturbations. We show that multilayer interactions can induce or accelerate global consensus even when no single layer achieves it alone, and conversely, that individually coordinated layers may lose consensus once interconnected. A common thread arising from our analysis is that the alignment in the weighted degrees of the nodes between the two layers is the main determinant of whether merging or switching can speed up convergence to consensus compared to layers operating in isolation, providing network design intervention guidelines.\looseness-1
\end{abstract}

\begin{IEEEkeywords}
Opinion dynamics, coordination games, multiplex networks, consensus.
\end{IEEEkeywords}

\section{Introduction}\label{sec:intro}
Opinion formation in modern societies is shaped by the interplay of influences across multiple communication modes, spanning traditional media and face-to-face discussions to algorithm-driven social-media feeds and private-messaging platforms. Individuals are therefore subject to influence from a \emph{multiplex} landscape, simultaneously or sequentially engaging with diverse sources that shape their beliefs. Understanding the formation and evolution of opinions thus requires considering the multiple modalities through which individuals interact.

The formal framework of \emph{multilayer networks} has been widely used to model and analyze emergent phenomena on networks of interconnected networks \cite{boccaletti2014structure,kivela2014multilayer,aleta2019multilayer}. In particular, when all interacting networks share the same set of nodes, as in the case where the same individuals engage across multiple online and offline social networks, the resulting multilayer network is referred to as a \emph{multiplex network}. Surveys of the field \cite{boccaletti2014structure,kivela2014multilayer,aleta2019multilayer} highlight the empirical prevalence of such multiplex structures in opinion formation and dynamics.

Motivated by these observations, we study opinion dynamics over a two-layer multiplex network, where each layer represents a distinct network; the layers can be interpreted as online vs. in-person interactions, or as different social media platforms. Classical single-layer frameworks such as DeGroot averaging \cite{degroot1974reaching} and its game-theoretic extensions \cite{friedkin1990social, shah2009gossip, ghaderi2014opinion} predict consensus on one network at a time; here, we extend these models to capture situations where opinions are shaped by multiple networks. Specifically, building on the single-layer quadratic coordination game model of \cite{ghaderi2014opinion}, we propose and analyze two potential coupling mechanisms between network layers. In the \emph{merged layers} setting, individuals integrate influences from both layers at every time step, with the combined interaction matrix given by a convex combination of the individual layer matrices. In the \emph{switching layers} setting, individuals alternate between layers over time, interacting on one layer for $k$ consecutive steps before switching to the other layer for one step, and repeating this cycle. These two interaction modes reflect different real-world scenarios: the former captures opinion dynamics under \emph{simultaneous exposure} to multiple information sources, while the latter is chosen to capture \emph{attention shifts} across platforms or contexts. 

The main analytical challenges in the study of our proposed models stem from assessing the spectral properties of the resulting update operators. In the merged model, the effective opinion update matrix is a weighted sum of the layer matrices, whose convergence behavior is not simply inherited from the individual layers. In the switching model, the dynamics involve products of the form $BA^{k}$, where $A$ and $B$ are the transition matrices of the two layers; as shown in prior studies of products of stochastic matrices \cite{hajnal1958weak, anthonisse1977exponential, xia2015products, chevalier2017sets}, such products need not inherit the convergence properties of their factors. 

In this work, we employ spectral tools from random-walk theory \cite{diaconis1991geometric, montenegro2006mathematical, bremaud2013markov, aldous-fill-2014, levin2017markov} and classical perturbation analysis \cite{schweitzer1968perturbation, o1993entrywise, horn2012matrix} to overcome these challenges, and provide three fundamental sets of results for each proposed model: (i) conditions on the spectral properties of the layers' transition matrices when consensus is guaranteed to form in each coupling mode, (ii) explicit convergence bounds on how rapidly consensus emerges, and (iii) sufficient conditions for stability under small perturbations to layer weights or connections. We support our theoretical findings using numerical experiments on a real-world high school contact data. Additional experiments, and the full proofs, are provided in the online appendix~\cite{shiu2026coordination}. 

\paragraph{The Main Takeaways}
In the merged layers model (our model of \emph{simultaneous exposure}), we find that 
exposure to a sufficiently stabilizing source can negate the destabilizing effects of another platform that would not individually support consensus. Even so, the speed of convergence to consensus will still depend on how ``similar'' the layers are. Specifically, we show that the appropriate notion of similarity is whether the nodes have similar (weighted) degrees across layers. 

In the switching layers model (our model of \emph{attention shifts}), consensus depends on whether a full round of alternating interactions manages to circulate opinions throughout the network, which can happen if the attention shifts are not too frequent. As a positive byproduct, we show that attention shifts can sometimes unlock stability even when neither layer converges alone. Here, we again find similarity of \emph{the weighted degrees of nodes} determining the speed of convergence. 

The main common takeaway across both models is that the (speed of) emergence of consensus in multiplex networks is not just about whether each layer is strong on its own, but about \emph{whether the layers align well enough, particularly by having the perceived importance of each agent being relatively similar across layers}. 
This perspective suggests that interventions (e.g., reinforcing links in one platform) can make consensus emerge, but that to accelerate agreement, these interventions should also make the platforms more ``homogeneous'', and so inevitably, shape the final consensus opinion. 

\vspace{0.1in}

\paragraph{Related Work}
Our work is most closely related to two lines of literature: (i) characterization of Nash equilibria of coordination games and consensus in single-layer networks, and (ii) analysis of games on multilayer/multiplex networks.

In the single-layer setting, classical averaging models established consensus conditions in connected graphs via Markov-chain techniques \cite{degroot1974reaching}. Game-theoretic extensions introduced quadratic cost formulations to capture social influence and leveraged spectral properties such as connectivity and the SLEM to ensure equilibrium uniqueness \cite{friedkin1990social, shah2009gossip, acemoglu2011opinion, ghaderi2014opinion, olshevsky2014linear}. Subsequent work has incorporated node heterogeneity, including stubborn agents \cite{ghaderi2013opinion} and degree-weighted influence mechanisms \cite{yildiz2013binary, cheng2025degree}, highlighting how high-degree nodes affect convergence dynamics.

Moving beyond single-layer networks, multilayer and multiplex networks can capture more realistic interaction patterns by modeling simultaneous or periodic engagement across layers \cite{boccaletti2014structure, kivela2014multilayer,aleta2019multilayer}. Game theoretical studies in this domain have investigated the existence and multiplicity of equilibria of general games played over multilayer network structures \cite{boruah2024multilayer, ebrahimi2025united}, as well as evolutionary games in multilayer contexts, demonstrating how interlayer links sustain cooperative behavior \cite{gomez2012evolution, wang2015evolutionary, aleta2019multilayer}. To our knowledge, coordination games (which capture DeGroot opinion dynamics) in multilayer networks have remained unexplored. A recent work in \cite{abedinzadeh2025stability} has explored the stability and robustness of opinion dynamics under a Friedkin–Johnsen model on  temporal multiplex networks; our ``switching layer'' model similarly considers the time-varying aspect of interactions across different network layers and its impact on opinion dynamics and stability, but differs due to adopting the DeGroot model of opinion updates.   

\section{Preliminaries: Single-layer Games}
\label{sec:model_setting}

\emph{a) The model:} Consider a network of $N$ agents as an undirected, weighted graph $G_1 = (\mathcal{V}, \mathcal{E}_1, W_1)$, where $\mathcal{V}$ represents agents, and $\mathcal{E}_1 \subseteq \mathcal{V} \times \mathcal{V}$ denotes pairwise interactions. Each edge $(i, j) \in \mathcal{E}_1$ carries a positive weight $w_{ij}^{(1)} > 0$, indicating the influence of agent $j$ on agent $i$. For non-edges, $w_{ij}^{(1)} = 0$, and self-loops are absent ($w_{ii}^{(1)} = 0$). The set of agents influencing agent $i$ is defined as $\partial_i^{(1)} = \{ j \in \mathcal{V} \mid (i, j) \in \mathcal{E}_1 \}$. 

Each agent $i \in \mathcal{V}$ holds an initial opinion $x_i(0) \in [0, 1]$, reflecting their stance on a topic such as support for an idea or proposed project, with $x_i(0) = 1$ and $x_i(0) = 0$ representing full support and extreme opposition, respectively. The initial opinion vector is $\mathbf{x}(0) = [x_1(0), \ldots, x_N(0)]^T$.

Agents update their opinions via a synchronous coordination game, with each agent $i$ minimizing the cost function
\begin{equation}
J_i(x_i, \mathbf{x}_{\partial_i^{(1)}}) = \frac{1}{2} \sum_{j \in \partial_i^{(1)}} w_{ij}^{(1)} (x_i - x_j)^2, \label{eq:cost_function}
\end{equation}
where $\mathbf{x}_{\partial_i^{(1)}} = \{ x_j \mid j \in \partial_i^{(1)} \}$ is the opinion vector of $i$'s influencers. This cost promotes alignment of $i$'s opinion with its influencers. Any opinion vector $\mathbf{x} = [x_1, \ldots, x_N]^T$ with $x_1 = \cdots = x_N$ (i.e., consensus) is a Nash equilibrium.

\emph{b) Agents' opinion dynamics:} At time $t$, agent $i$ updates their opinion by minimizing the cost function \eqref{eq:cost_function}, with the first-order condition yielding the optimal update rule:
\begin{equation*}
x_i(t+1) = \frac{\sum_{j \in \partial_i^{(1)}} w_{ij}^{(1)} x_j(t)}{\sum_{j \in \partial_i^{(1)}} w_{ij}^{(1)}}. 
\end{equation*}
Inspired by this, a row-stochastic \emph{transition matrix} $A$ is used to capture the evolution of agents' opinion; 
its entries are
\begin{equation}
A_{ij} =
\begin{cases}
\frac{w_{ij}^{(1)}}{\sum_{k \in \partial_i^{(1)}} w_{ik}^{(1)}} & \text{if } (i, j) \in \mathcal{E}_1, \\
0 & \text{otherwise}.
\end{cases} \label{eq:matrix_A}
\end{equation}
Then, the opinion dynamics follows $\mathbf{x}(t+1) = A \mathbf{x}(t)$, so that 
\begin{equation}
\mathbf{x}(t) = A^t \mathbf{x}(0). \label{eq:dynamics}
\end{equation}

\emph{c) Consensus.} Assuming $A$ is primitive 
we can view $A$ as the transition matrix for a random walk on $G_1$. By classical random walk theory~\cite{bremaud2013markov, aldous-fill-2014}, as $t \to \infty$, the matrix $A^t$ converges to $\begin{bmatrix}
\boldsymbol{\pi}^T, 
\ldots, 
\boldsymbol{\pi}^T 
\end{bmatrix}^T$, 
where $\boldsymbol{\pi} = [\pi_1, \ldots, \pi_N]$ is the unique stationary distribution of $A$, with
\begin{align}
    \pi_i = \frac{d^{(1)}_{i}}{2|E_1|}, \label{eq:stationary_dist}
\end{align}
where $d^{(1)}_{i} = \sum_{j=1}^N w_{ij}^{(1)}$ is agent $i$'s weighted degree and $|E_1| = \frac{1}{2} \sum_{i=1}^N d^{(1)}_{i}$ is half the total edge weight.  
Consequently, the consensus opinion at equilibrium is 
\begin{equation}
x_i(\infty) = \sum_{j=1}^N \pi_j x_j(0), \quad \forall i \in \mathcal{V}. \label{eq:consensus}
\end{equation}

\emph{d) Analyzing convergence:} To analyze convergence, we equip $\mathbb{R}^N$ with the scalar product $\langle \mathbf{z}, \mathbf{y} \rangle_\pi = \sum_{i=1}^N z_i y_i \pi_i$, 
and the associated norm $\|\mathbf{z}\|_\pi = \left( \sum_{i=1}^N z_i^2 \pi_i \right)^{1/2}$.
Further, for any stochastic matrix $M$, we denote its eigenvalues by $\lambda_1(M) \geq \lambda_2(M) \geq \cdots \geq \lambda_N(M)$. 
Finally, define the error vector at time $t$ as the difference between the opinion vector at time $t$ and the consensus opinion, i.e., 
$\mathbf{e}(t) := \mathbf{x}(t) - \mathbf{x}(\infty)$. 
The following lemma from \cite{ghaderi2014opinion} establishes that the error term converges geometrically.

\begin{lemma}[\cite{ghaderi2014opinion}, Lemma 2]
\label{lem:convergence_rate}
Under the dynamics \eqref{eq:dynamics}, the error $\mathbf{e}(t) = \mathbf{x}(t) - \mathbf{x}(\infty)$ satisfies
\begin{equation*}
\|\mathbf{e}(t)\|_\pi \leq \rho_2(A)^t \|\mathbf{e}(0)\|_\pi, 
\end{equation*}
where $\rho_2(A) = \max_{i \neq 1} |\lambda_i(A)|$ is the second largest eigenvalue modulus (SLEM) of $A$.
\end{lemma}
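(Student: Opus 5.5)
The plan is to use the reversibility of the random walk on the undirected graph $G_1$: this makes $A$ self-adjoint in $\langle\cdot,\cdot\rangle_\pi$, and the estimate then follows from the spectral theorem. Write $D=\mathrm{diag}(d^{(1)}_1,\dots,d^{(1)}_N)$ and $W_1=[w^{(1)}_{ij}]$, so that $A=D^{-1}W_1$ with $W_1$ symmetric.

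The first step is detailed balance. From \eqref{eq:stationary_dist}, $\pi_i A_{ij}=w^{(1)}_{ij}/(2|E_1|)=\pi_j A_{ji}$. Hence for all $\mathbf{z},\mathbf{y}$,
\begin{equation*}
\langle A\mathbf{z},\mathbf{y}\rangle_\pi=\sum_{i,j}\pi_iA_{ij}z_jy_i=\langle \mathbf{z},A\mathbf{y}\rangle_\pi .
\end{equation*}
Equivalently, $D^{1/2}AD^{-1/2}=D^{-1/2}W_1D^{-1/2}$ is symmetric. It follows that $A$ has real eigenvalues $1=\lambda_1(A)>\lambda_2(A)\ge\cdots\ge\lambda_N(A)$ and a $\pi$-orthonormal eigenbasis $\mathbf{v}_1=\mathbf{1},\mathbf{v}_2,\dots,\mathbf{v}_N$. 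The strict inequality $\lambda_1>\lambda_2$ and the bound $|\lambda_N|<1$ come from primitivity, through Perron--Frobenius.

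The second step identifies the error with the component of $\mathbf{x}(0)$ orthogonal to the consensus direction. By \eqref{eq:consensus}, $\mathbf{x}(\infty)=\langle \mathbf{x}(0),\mathbf{1}\rangle_\pi\mathbf{1}$, which is the $\pi$-orthogonal projection of $\mathbf{x}(0)$ onto $\mathrm{span}(\mathbf{1})$. Since $A\mathbf{1}=\mathbf{1}$, we have $\mathbf{x}(\infty)=A^t\mathbf{x}(\infty)$, and therefore $\mathbf{e}(t)=A^t\mathbf{e}(0)$. Moreover $\mathbf{e}(0)\perp_\pi\mathbf{1}$, so $\mathbf{e}(0)=\sum_{i\ge2}c_i\mathbf{v}_i$ with $\|\mathbf{e}(0)\|_\pi^2=\sum_{i\ge 2}c_i^2$.

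The third step is the estimate itself. Applying $A^t$ gives $\mathbf{e}(t)=\sum_{i\ge2}\lambda_i^tc_i\mathbf{v}_i$. By $\pi$-orthonormality,
\begin{equation*}
\|\mathbf{e}(t)\|_\pi^2=\sum_{i\ge2}\lambda_i^{2t}c_i^2\le\rho_2(A)^{2t}\|\mathbf{e}(0)\|_\pi^2 .
\end{equation*}
Taking square roots gives the claim.

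The one point needing care is the first step. For a general, non-reversible stochastic matrix the bound can fail: the eigenvectors need not be orthogonal, and transient growth can make $\|A^t\mathbf{e}(0)\|_\pi$ exceed $\rho_2(A)^t\|\mathbf{e}(0)\|_\pi$. The argument therefore rests on the undirected, symmetric-weight structure of $G_1$, which is exactly what detailed balance uses. The remaining steps are routine.
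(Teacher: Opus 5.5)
Your proof is correct. Detailed balance makes $A$ self-adjoint in $\langle\cdot,\cdot\rangle_\pi$, you show $\mathbf{e}(t)=A^t\mathbf{e}(0)$ with $\mathbf{e}(0)\perp_\pi\mathbf{1}$, and expanding in a $\pi$-orthonormal eigenbasis gives the bound. The paper does not prove this lemma itself but imports it from \cite{ghaderi2014opinion}, and your spectral-decomposition argument for reversible chains is essentially the standard proof behind that citation.
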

In words, the variance of the opinions under the stationary distribution converges exponentially fast, at a rate dominated by the second largest eigenvalue of the averaging matrix $A$. 
The SLEM $\rho_2$ characterizes the underlying graph structure. Generally, graphs with more symmetric structures, such as complete graphs, exhibit smaller $\rho_2$ values, thereby achieving faster mixing processes and accelerating convergence to the stationary distribution.

For subsequent analysis, we further equip both matrices and vectors with the max norm.  Given a matrix \(M=[M_{ij}]\) and a vector
\(\mathbf{v}=(v_{1},\dots,v_{N})\), define
\begin{center}
$
\|M\|_{\max} = \max_{i,j}\lvert M_{ij}\rvert,
\qquad
\|\mathbf{v}\|_{\max} = \max_{1\le i\le N}\lvert v_{i}\rvert.
$
\end{center}
When the matrix $A$ is primitive, its stationary distribution
satisfies $\pi_{j}>0$ for every $1\leq j\leq N$.  Consequently, by Lemma \ref{lem:convergence_rate}, we
obtain the following bound on the error.
\begin{corollary}
Under the dynamics~\eqref{eq:dynamics},
\begin{center}
$
\|\mathbf{e}(t)\|_{\max}
\lesssim
\rho_{2}(A)^{t}\,
\|\mathbf{e}(0)\|_{\pi}.
$
\end{center}
\end{corollary}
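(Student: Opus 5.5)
The plan is to compare the max norm with the $\pi$-weighted norm and then apply Lemma~\ref{lem:convergence_rate}. Since $A$ is primitive, every $\pi_j>0$, so $\pi_{\min}:=\min_j \pi_j>0$ is a constant that depends only on the graph $G_1$. It does not depend on $t$ or on the initial opinions. The symbol $\lesssim$ will be read as ``up to a multiplicative constant depending only on $A$'', and we will identify that constant explicitly as $\pi_{\min}^{-1/2}$.

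The first step is the norm comparison. For any vector $\mathbf{z}$ and any index $i$,
\begin{equation*}
\pi_{\min} z_i^2 \le \pi_i z_i^2 \le \sum_{j=1}^N \pi_j z_j^2 = \|\mathbf{z}\|_\pi^2 .
\end{equation*}
Taking the maximum over $i$ gives $\|\mathbf{z}\|_{\max} \le \pi_{\min}^{-1/2}\|\mathbf{z}\|_\pi$.

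The second step applies this comparison to $\mathbf{z}=\mathbf{e}(t)$ and then invokes Lemma~\ref{lem:convergence_rate}:
\begin{equation*}
\|\mathbf{e}(t)\|_{\max} \le \pi_{\min}^{-1/2}\|\mathbf{e}(t)\|_\pi \le \pi_{\min}^{-1/2}\rho_2(A)^t\|\mathbf{e}(0)\|_\pi .
\end{equation*}
This is the claimed bound, with implied constant $\pi_{\min}^{-1/2}$. Using \eqref{eq:stationary_dist}, this constant equals $\bigl(2|E_1|/\min_i d_i^{(1)}\bigr)^{1/2}$.

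No step here is a real obstacle. The only point that needs care is that the constant is finite, which follows from primitivity through the positivity of $\pi$ noted just before the statement. It is also worth remarking that the bound is meaningful only when $\rho_2(A)<1$, which again follows from primitivity.
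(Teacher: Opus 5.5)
Your proof is correct and follows the same route as the paper, which notes that primitivity gives $\pi_j>0$ for all $j$ and then invokes Lemma~\ref{lem:convergence_rate}, leaving the comparison between the max norm and the $\pi$-norm implicit. Your explicit constant $\pi_{\min}^{-1/2}=\bigl(2|E_1|/\min_i d_i^{(1)}\bigr)^{1/2}$ makes precise the graph-dependent constant that the paper hides in $\lesssim$.
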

Here, $\lesssim$ indicates a bound up to a multiplicative constant, which depends on the graph. In words, the corollary states that the \emph{largest} deviation from consensus decays exponentially. 

\emph{e) Stability of the Stationary Distribution:} 
The following classical result on the stability of stationary distributions under matrix perturbations shows how the stationary distribution changes when the transition matrix is perturbed.

\begin{theorem}[\cite{schweitzer1968perturbation}, Theorem 2]
\label{thm:schweitzer}
Suppose $P$ and $\tilde{P}$ are $N \times N$ stochastic matrices with unique stationary distributions $\boldsymbol{\pi}$ and $\boldsymbol{\tilde{\pi}}$, respectively. For the perturbation $E = \tilde{P} - P$, 
\begin{equation*}
\boldsymbol{\tilde{\pi}} - \boldsymbol{\pi} = \boldsymbol{\pi} E Z,
\end{equation*}
where $Z = (I - P + \mathbf{1}\boldsymbol{\pi}^T)^{-1}$.
\end{theorem}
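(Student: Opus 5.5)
The plan is to work directly from the defining equations of the two stationary distributions, treated as row vectors with $\boldsymbol{\pi}\mathbf{1}=\tilde{\boldsymbol{\pi}}\mathbf{1}=1$, and to use the fundamental matrix $Z=(I-P+\mathbf{1}\boldsymbol{\pi}^T)^{-1}$ as a left-inverse tool. First I would check that $Z$ is well defined. Uniqueness of $\boldsymbol{\pi}$ means the eigenvalue $1$ of $P$ is simple. Adding the rank-one term $\mathbf{1}\boldsymbol{\pi}^T$ moves that eigenvalue of $I-P$ from $0$ to $1$ and leaves the other eigenvalues unchanged, so $I-P+\mathbf{1}\boldsymbol{\pi}^T$ is invertible. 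Next I would record the two identities $\boldsymbol{\pi}(I-P+\mathbf{1}\boldsymbol{\pi}^T)=\boldsymbol{\pi}$ and $\tilde{\boldsymbol{\pi}}\,\mathbf{1}\boldsymbol{\pi}^T=\boldsymbol{\pi}$. Multiplying the first on the right by $Z$ gives $\boldsymbol{\pi}Z=\boldsymbol{\pi}$.

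The core computation starts from $\tilde{\boldsymbol{\pi}}(I-\tilde P)=0$. Writing $I-P=(I-\tilde P)+E$ gives $\tilde{\boldsymbol{\pi}}(I-P)=\tilde{\boldsymbol{\pi}}E$, and adding $\tilde{\boldsymbol{\pi}}\mathbf{1}\boldsymbol{\pi}^T=\boldsymbol{\pi}$ yields $\tilde{\boldsymbol{\pi}}Z^{-1}=\tilde{\boldsymbol{\pi}}E+\boldsymbol{\pi}$. Multiplying on the right by $Z$ and using $\boldsymbol{\pi}Z=\boldsymbol{\pi}$ gives the exact relation
\begin{equation*}
\tilde{\boldsymbol{\pi}}-\boldsymbol{\pi}=\tilde{\boldsymbol{\pi}}EZ,
\qquad\text{equivalently}\qquad
\tilde{\boldsymbol{\pi}}=\boldsymbol{\pi}(I-EZ)^{-1}.
\end{equation*}
The invertibility of $I-EZ$ follows here because $\tilde{\boldsymbol{\pi}}$ is unique.

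The final step, which I expect to be the main obstacle, is to pass from $\tilde{\boldsymbol{\pi}}EZ$ to the stated right-hand side $\boldsymbol{\pi}EZ$. The two sides differ by $(\tilde{\boldsymbol{\pi}}-\boldsymbol{\pi})EZ=\tilde{\boldsymbol{\pi}}(EZ)^2$. The stated equality is therefore exact precisely when this remainder vanishes, for example when $\boldsymbol{\pi}EZE=0$. Otherwise it holds up to a term of order $\|E\|^2$. To make this precise, when $\|EZ\|<1$ I would expand $(I-EZ)^{-1}$ as a Neumann series:
\begin{equation*}
\tilde{\boldsymbol{\pi}}-\boldsymbol{\pi}=\boldsymbol{\pi}EZ+\boldsymbol{\pi}\sum_{k\ge2}(EZ)^k .
\end{equation*}
This exhibits $\boldsymbol{\pi}EZ$ as the leading term, with remainder bounded by $\|\boldsymbol{\pi}\|\,\|EZ\|^2/(1-\|EZ\|)$.

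So for the statement as worded, I would present $\tilde{\boldsymbol{\pi}}-\boldsymbol{\pi}=\boldsymbol{\pi}EZ$ as the first-order perturbation formula. The exact version carries $\tilde{\boldsymbol{\pi}}$ in place of $\boldsymbol{\pi}$, and the two coincide under the vanishing condition above. This first-order reading is the form the later stability analysis of the layer matrices can actually use.
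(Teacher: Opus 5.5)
Your derivation is correct. It is the standard argument behind the cited result; the paper gives no proof of its own, only the citation to Schweitzer. From $\tilde{\boldsymbol{\pi}}(I-\tilde P)=0$ and $\boldsymbol{\pi}Z=\boldsymbol{\pi}$ you get the exact identity $\tilde{\boldsymbol{\pi}}-\boldsymbol{\pi}=\tilde{\boldsymbol{\pi}}EZ$, equivalently $\tilde{\boldsymbol{\pi}}=\boldsymbol{\pi}(I-EZ)^{-1}$. To my recollection of Schweitzer's paper, this is the exact form he proves. Your one-line claim that $I-EZ$ is invertible is easiest to see from the factorization $I-EZ=(I-\tilde P+\mathbf{1}\boldsymbol{\pi}^T)Z$. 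If $(I-\tilde P+\mathbf{1}\boldsymbol{\pi}^T)x=0$, multiplying on the left by $\tilde{\boldsymbol{\pi}}$ gives $\boldsymbol{\pi}x=0$. Hence $\tilde Px=x$, so $x\in\operatorname{span}(\mathbf{1})$ by uniqueness of $\tilde{\boldsymbol{\pi}}$, and then $\boldsymbol{\pi}x=0$ forces $x=0$.

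Your treatment of the last step is also right, and you should state it as a conclusion rather than a hedge. The identity as printed, with $\boldsymbol{\pi}EZ$ on the right, is false in general; it is only the first-order term. You left open whether the remainder $\tilde{\boldsymbol{\pi}}(EZ)^2$ can actually be nonzero, and a two-state example settles it. Let both rows of $P$ equal $\boldsymbol{\pi}=(\tfrac12,\tfrac12)$, so that $P=\mathbf{1}\boldsymbol{\pi}^T$ and $Z=I$, and take
\[
\tilde P=\begin{pmatrix}0&1\\ \tfrac12&\tfrac12\end{pmatrix},\qquad \tilde{\boldsymbol{\pi}}=\bigl(\tfrac13,\tfrac23\bigr),\qquad E=\begin{pmatrix}-\tfrac12&\tfrac12\\ 0&0\end{pmatrix}.
\]
Then $\tilde{\boldsymbol{\pi}}-\boldsymbol{\pi}=(-\tfrac16,\tfrac16)=\tilde{\boldsymbol{\pi}}EZ$, while $\boldsymbol{\pi}EZ=(-\tfrac14,\tfrac14)$. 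The gap is $\tilde{\boldsymbol{\pi}}E^2=(\tfrac1{12},-\tfrac1{12})$, exactly your remainder.

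One correction to your closing remark: the exact identity is what the paper's Corollary~\ref{cor:stability_bound} needs, not the first-order reading. Since $\tilde{\boldsymbol{\pi}}$ is a probability vector, $\tilde{\boldsymbol{\pi}}-\boldsymbol{\pi}=\tilde{\boldsymbol{\pi}}EZ$ gives at once
\[
\|\tilde{\boldsymbol{\pi}}-\boldsymbol{\pi}\|_{\max}\le\|EZ\|_{\max}\le N\|Z\|_{\max}\|E\|_{\max}.
\]
The constant depends only on $P$, and no smallness assumption on $E$ is needed. The first-order reading would instead force you to bound the Neumann remainder, and hence to assume $\|EZ\|<1$. So the paper's one-line justification of the corollary (entries of the stationary distribution lie in $[0,1]$) goes through verbatim with $\tilde{\boldsymbol{\pi}}$ in place of $\boldsymbol{\pi}$. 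The misquotation therefore does not affect the corollary.
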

As each entry of the stationary distribution satisfies $0 \leq \pi_i \leq 1$, 
we immediately obtain the following corollary:
\begin{corollary}
\label{cor:stability_bound}
\[
\|\boldsymbol{\tilde{\pi}} - \boldsymbol{\pi} \|_{\max} \lesssim  \|\tilde{P} - P \|_{\max}.
\]
\end{corollary}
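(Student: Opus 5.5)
The plan is to start from the exact identity in Theorem~\ref{thm:schweitzer}, $\boldsymbol{\tilde{\pi}} - \boldsymbol{\pi} = \boldsymbol{\pi} E Z$ with $E=\tilde P - P$ and $Z=(I-P+\mathbf{1}\boldsymbol{\pi}^T)^{-1}$, and bound the right-hand side entrywise. The hidden constant in $\lesssim$ will be allowed to depend on $P$ (through $Z$ and $N$), but not on the perturbation $E$.

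\textbf{Step 1: bound $\boldsymbol{\pi}E$.} For each column index $k$,
\[
(\boldsymbol{\pi}E)_k=\sum_i \pi_i E_{ik}.
\]
Since $0\le \pi_i\le 1$ and $\sum_i \pi_i = 1$, this is a convex combination of the entries $E_{ik}$. Hence
\[
|(\boldsymbol{\pi}E)_k|\le \max_i |E_{ik}| \le \|E\|_{\max}.
\]
This is exactly where the remark that each $\pi_i\in[0,1]$ is used; even the cruder bound $\sum_i \pi_i |E_{ik}|\le N\|E\|_{\max}$ would suffice.

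\textbf{Step 2: multiply by $Z$.} For each $j$,
\[
|(\boldsymbol{\pi}EZ)_j| = \Bigl|\sum_k (\boldsymbol{\pi}E)_k Z_{kj}\Bigr| \le N\,\|Z\|_{\max}\,\|E\|_{\max}.
\]
Taking the maximum over $j$ gives
\[
\|\boldsymbol{\tilde{\pi}}-\boldsymbol{\pi}\|_{\max}\le N\|Z\|_{\max}\|\tilde P-P\|_{\max},
\]
which is the claim with constant $C=N\|Z\|_{\max}$.

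\textbf{Step 3: check that $Z$ exists and depends only on $P$.} This is the only real obstacle, and it is mild. By hypothesis $P$ has a unique stationary distribution, so the eigenvalue $1$ of $P$ is simple. Passing from $I-P$ to $I-P+\mathbf{1}\boldsymbol{\pi}^T$ is a rank-one deflation, and it makes the matrix nonsingular. To see this, suppose $(I-P+\mathbf{1}\boldsymbol{\pi}^T)v=0$. Left-multiplying by $\boldsymbol{\pi}$ and using $\boldsymbol{\pi}P=\boldsymbol{\pi}$, $\boldsymbol{\pi}\mathbf{1}=1$ gives $\boldsymbol{\pi}v=0$. Then $(I-P)v=0$, so $v$ lies in the kernel of $I-P$, which is spanned by $\mathbf{1}$ because $1$ is a simple eigenvalue. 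Writing $v=c\mathbf{1}$, the condition $\boldsymbol{\pi}v=c=0$ forces $v=0$. Therefore $Z$ is a fixed matrix determined by $P$ alone, and $C$ is a graph-dependent constant in the sense used for $\lesssim$.
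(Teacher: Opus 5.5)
Your proof is correct and is the paper's argument written out in full: the paper invokes Schweitzer's identity together with $0\le\pi_i\le 1$ and absorbs $N\|Z\|_{\max}$, which depends only on $P$, into the constant. Your Step~3 additionally re-verifies that $I-P+\mathbf{1}\boldsymbol{\pi}^T$ is invertible, which the cited theorem already presupposes.

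One caveat is inherited from the paper's statement of Theorem~\ref{thm:schweitzer}. The exact identity is $\boldsymbol{\tilde{\pi}}-\boldsymbol{\pi}=\boldsymbol{\tilde{\pi}}EZ$, and $\boldsymbol{\pi}EZ$ is only its first-order term. Your Step~1 uses nothing about the row vector multiplying $E$ except that it is a probability vector, so the bound $\|\boldsymbol{\tilde{\pi}}-\boldsymbol{\pi}\|_{\max}\le N\|Z\|_{\max}\|\tilde P-P\|_{\max}$ still follows verbatim.
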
 
\section{Merged Layers Opinion Dynamics}
\label{sec:multilayer_extension}

\subsection{Merged Layers Coordination Game Model}
\label{subsec:merged_multilayer}

We now proceed to proposing our first extension of the classical single-layer coordination games. 
To model the \emph{simultaneous}  influence of multiple interaction modalities, such as agents being influenced by both online social media and offline in-person networks, we introduce a second network $G_2 = (\mathcal{V}, \mathcal{E}_2, W_2)$, sharing the same vertex set $\mathcal{V}$ with $G_1$, but with the edge set $\mathcal{E}_2 \subseteq \mathcal{V} \times \mathcal{V}$ and weight matrix $W_2$ being distinct from those of $G_1$. Each edge $(i, j) \in \mathcal{E}_2$ has a positive weight $w_{ij}^{(2)} > 0$, with $w_{ij}^{(2)} = 0$ for edges not in $\mathcal{E}_2$, and no self-loops. 

Similar to the row-stochastic matrix $A$ defined earlier, we define a row-stochastic transition matrix $B$ for $G_2$ with entries
\begin{equation*}
B_{ij} =
\begin{cases}
\frac{w_{ij}^{(2)}}{\sum_{k \in \partial_i^{(2)}} w_{ik}^{(2)}} & \text{if } (i, j) \in \mathcal{E}_2, \\
0 & \text{otherwise}.
\end{cases} 
\end{equation*}
We also assume $B$ is primitive, ensuring convergence to consensus in $G_2$.

We now define the \emph{merged network} as $G_m = (\mathcal{V}, \mathcal{E}_m, W_m)$, with the weight matrix $W_m = \alpha W_1 + (1 - \alpha) W_2$ and \(\mathcal{E}_m= \mathcal{E}_1 \cup \mathcal{E}_2 \), where $\alpha \in (0, 1)$ balances the influence of the two layers. 
The utility function for agent $i$ over $G_m$ integrates influences from both layers:
\begin{equation*}
J_i(x_i, \mathbf{x}_{\partial_i^{(m)}}) = \frac{1}{2} \sum_{j \in \partial_i^{(m)}} w_{ij}^{(m)} (x_i - x_j)^2, 
\end{equation*}
where $\partial_i^{(m)}$ 
is the out-neighborhood of agent $i$ in $G_m$, and with edge weights $w_{ij}^{(m)} = \alpha w_{ij}^{(1)} + (1 - \alpha) w_{ij}^{(2)}$. 
Similar to the single-layer network, we first note that the best-response strategy is
$x_i(t+1)=\frac{\sum_{j \in \partial_i^{(m)}}\big(\alpha {w_{ij}^{(1)}} + (1-\alpha) w_{ij}^{(2)}\big)x_j(t)}{\sum_{k \in \partial_i^{(m)}} \alpha w_{ik}^{(1)} + (1-\alpha ) w_{ik}^{(2)}}$. 
Accordingly, the opinion dynamics can be interpreted as a process over the merged graph $G_m$, defined by the combined edge set $\mathcal{E}_m$ and weight matrix $W_m$, and opinion dynamics
\begin{equation}
\mathbf{x}(t+1) =C \mathbf{x}(t)~,
\label{eq:dynamics-merged}
\end{equation} 
where $C$ is the merged transition matrix given by
\[C_{ij} = \begin{cases}
\frac{w_{ij}^{(m)}}{\sum_{k \in \partial_i^{(m)}} w_{ik}^{(m)}} & \text{if } (i, j) \in \mathcal{E}_m, \\
0 & \text{otherwise}.
\end{cases} \]

\subsection{Consensus}
\label{subsec:consensus_opinion}

We begin with the existence and uniqueness of consensus. 

\begin{proposition}
\label{prop:merge_primitive}
Let $A$ and $B$ be the transition matrices associated with layers $G_1$ and $G_2$, respectively. Suppose that at least one of $A$ or $B$ is primitive, and let $\alpha \in (0,1)$. Then the merged transition matrix $C$ is primitive.
\end{proposition}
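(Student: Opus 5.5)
Without loss of generality, suppose $A$ is primitive; the case where $B$ is primitive is symmetric, with $\alpha$ and $1-\alpha$ exchanged. The plan is to show that the zero pattern of $C$ contains the zero pattern of $A$. That is, we show $C_{ij}>0$ whenever $A_{ij}>0$. Primitivity depends only on which entries are positive, so it then transfers from $A$ to $C$.

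First, fix $i,j$ with $A_{ij}>0$. Then $(i,j)\in\mathcal{E}_1\subseteq\mathcal{E}_m$ and $w^{(1)}_{ij}>0$. Since $\alpha\in(0,1)$ and all weights are nonnegative,
\[
w^{(m)}_{ij}=\alpha w^{(1)}_{ij}+(1-\alpha)w^{(2)}_{ij}\ge \alpha w^{(1)}_{ij}>0 .
\]
The denominator $\sum_k w^{(m)}_{ik}$ is finite, and it is positive because it is at least $w^{(m)}_{ij}$. Hence $C_{ij}>0$. More quantitatively, the same computation gives $C_{ij}\ge c\,A_{ij}$ for every $i,j$, where
\[
c=\min_i \frac{\alpha d^{(1)}_i}{\alpha d^{(1)}_i+(1-\alpha)d^{(2)}_i}>0 .
\]
For this, $d_i^{(1)}>0$ holds for every $i$, because $A$ is row-stochastic and every node therefore has a neighbour in $G_1$. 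As a side check, the row sums of $C$ equal $1$, so $C$ is a well-defined stochastic matrix.

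Next, use that $A$ is primitive: there is $k\ge1$ with $A^k>0$ entrywise. Because all matrices involved are nonnegative, the entrywise bound $C\ge cA$ multiplies through to $C^k\ge c^k A^k>0$. So $C^k$ is entrywise positive, and $C$ is primitive.

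There is no serious obstacle. The only points needing care are that the normalizing denominators of $C$ are nonzero, which is guaranteed because every node has positive degree in the primitive layer, and that entrywise domination is preserved under products of nonnegative matrices. The second layer can only add edges, so it never destroys the positivity of $A^k$.
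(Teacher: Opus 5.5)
Your proposal is correct and follows essentially the same route as the paper: both prove the entrywise domination $C \ge cA$ for some $c>0$ and conclude $C^k \ge c^k A^k > 0$. The only difference is your row-wise constant $c=\min_i \alpha d^{(1)}_i/(\alpha d^{(1)}_i+(1-\alpha)d^{(2)}_i)$, which is sharper than the paper's $\alpha \min_i d^{(1)}_i/(2(|E_1|+|E_2|))$ but plays exactly the same role.
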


The proof follows from algebraic manipulations of the definitions of the matrices and the assumed primitivity of one of them. 
%
Intuitively, this finding establishes that stable consensus can emerge even when one interaction layer has poor mixing properties, provided the other layer compensates. 

We next compare the consensus opinion of the merged network $G_m$ with those of the individual layers $G_1$ and $G_2$. 

\begin{proposition}
\label{prop:consensus_bound}
The merged consensus $x^m(\infty)$ satisfies
\[
\min\{x^1(\infty),\,x^2(\infty)\}
\;\leq\;
x^m(\infty)
\;\leq\;
\max\{x^1(\infty),\,x^2(\infty)\},
\]
where $x^1(\infty)$ (resp. $x^2(\infty)$) is $G_1$'s (resp. $G_2$'s) consensus. 
\end{proposition}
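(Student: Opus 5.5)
The plan is to use the explicit form of the stationary distribution of a reversible random walk, \eqref{eq:stationary_dist}, applied to each layer and to the merged graph, and to show that the merged consensus is a convex combination of the two layer consensuses.

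First, note that $W_m=\alpha W_1+(1-\alpha)W_2$ is symmetric with zero diagonal. So $C$ is the random-walk matrix of the undirected weighted graph $G_m$, and it is primitive by Proposition~\ref{prop:merge_primitive}. By \eqref{eq:stationary_dist}, its stationary distribution is
\[
\pi^{(m)}_i=\frac{d^{(m)}_i}{D_m},\qquad d^{(m)}_i=\alpha d^{(1)}_i+(1-\alpha)d^{(2)}_i,\qquad D_m=\sum_i d^{(m)}_i=\alpha D_1+(1-\alpha)D_2,
\]
where $D_\ell=2|E_\ell|=\sum_i d^{(\ell)}_i$. Likewise $\pi^{(\ell)}_i=d^{(\ell)}_i/D_\ell$ for $\ell=1,2$, which presumes that both layers are primitive so that $x^1(\infty)$ and $x^2(\infty)$ are well defined, as the statement implicitly assumes.

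The key algebraic step is to write
\[
\pi^{(m)}_i=\frac{\alpha D_1}{D_m}\,\pi^{(1)}_i+\frac{(1-\alpha)D_2}{D_m}\,\pi^{(2)}_i=\beta\,\pi^{(1)}_i+(1-\beta)\,\pi^{(2)}_i,
\qquad \beta:=\frac{\alpha D_1}{\alpha D_1+(1-\alpha)D_2}\in(0,1).
\]
Plugging this into \eqref{eq:consensus} with the same initial vector $\mathbf{x}(0)$ gives
\[
x^m(\infty)=\sum_j \pi^{(m)}_j x_j(0)=\beta\,x^1(\infty)+(1-\beta)\,x^2(\infty).
\]
A convex combination of two reals lies between their minimum and maximum, which yields the claim. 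As a byproduct, the weight $\beta$ is explicit in terms of $\alpha$ and the total edge weights.

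The only real obstacle is justifying the stationary-distribution formula for $C$. This requires symmetry of $W_m$, which is inherited from $W_1$ and $W_2$ because the graphs are undirected. It also requires uniqueness of the stationary distribution, which follows from primitivity, i.e., from Proposition~\ref{prop:merge_primitive}. I would also check directly that $\pi^{(m)}C=\pi^{(m)}$ via the detailed balance identity $d^{(m)}_iC_{ij}=w^{(m)}_{ij}=w^{(m)}_{ji}=d^{(m)}_jC_{ji}$.
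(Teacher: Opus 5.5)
Your proposal is correct and is essentially the paper's proof: the paper also plugs the degree-proportional stationary distributions into the consensus formula and writes $x^m(\infty)$ as a convex combination with weight $\alpha|E_1|/(\alpha|E_1|+(1-\alpha)|E_2|)$ on $x^1(\infty)$, which is your $\beta$ since $D_\ell=2|E_\ell|$. Your check of the stationary distribution of $C$ (symmetry of $W_m$, detailed balance, and uniqueness via Proposition~\ref{prop:merge_primitive}) fills in a step the paper takes for granted.
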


The proof follows primarily from definitions. This finding states that the consensus of the merged system always lies between the consensuses of the individual layers. In fact, the bounds are tight: if the two layers are identical, then all three consensus values coincide exactly. 

\vspace{-0.1in}
\subsection{Convergence Rate}
\label{sec:convergence_rate} 

We next characterize the convergence rate to consensus. 

\begin{proposition}
\label{prop:slem_bound}
The SLEM of merged matrix $C$ satisfies
$\rho_2(C) \geq \frac{1}{N-1}$. 
Moreover, if $d^{(1)}_{i} = d^{(2)}_{i}$ for all $1 \leq i \leq N$, then
\begin{equation*}
\rho_2(C) \leq \max\{\rho_2(A), \rho_2(B)\}.
\end{equation*}
\end{proposition}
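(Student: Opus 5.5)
Both parts use the fact that $C$ is reversible: $C = D_m^{-1}W_m$ with $D_m=\mathrm{diag}(d^{(m)}_i)$, $d^{(m)}_i=\alpha d^{(1)}_i+(1-\alpha)d^{(2)}_i$, and $W_m$ symmetric. Hence $C$ is similar to the symmetric matrix $S_C=D_m^{-1/2}W_mD_m^{-1/2}$, so its eigenvalues are real, with $\lambda_1(C)=1$. The same holds for $A$ and $B$ with $D_1,D_2$.

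\emph{Lower bound.} Since $W_m$ has no self-loops, $\mathrm{tr}(C)=\sum_i C_{ii}=0$. Therefore $\sum_{i\ge 2}\lambda_i(C)=-1$. Each term satisfies $|\lambda_i(C)|\le\rho_2(C)$, so
\[
1=\Big|\sum_{i=2}^N\lambda_i(C)\Big|\le (N-1)\rho_2(C),
\]
which gives $\rho_2(C)\ge \frac{1}{N-1}$.

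\emph{Upper bound under equal degrees.} Suppose $d^{(1)}_i=d^{(2)}_i=:d_i$ for all $i$, and set $D=\mathrm{diag}(d_i)$. Then $D_m=D$ and
\[
C=D^{-1}\bigl(\alpha W_1+(1-\alpha)W_2\bigr)=\alpha A+(1-\alpha)B.
\]
All three matrices share the stationary distribution $\pi_i=d_i/\sum_j d_j$, and all are self-adjoint in $\langle\cdot,\cdot\rangle_\pi$. Let $\mathbf 1^\perp$ denote the $\pi$-orthogonal complement of the constants; it is invariant under each of the three matrices. For a self-adjoint operator, $\rho_2$ equals the operator norm of its restriction to $\mathbf 1^\perp$:
\[
\rho_2(M)=\sup_{\mathbf z\perp_\pi\mathbf 1,\ \|\mathbf z\|_\pi=1}|\langle M\mathbf z,\mathbf z\rangle_\pi|.
\]
(Alternatively, conjugate by $D^{1/2}$ and use the symmetric matrices $D^{-1/2}W_kD^{-1/2}$ restricted to the orthogonal complement of $D^{1/2}\mathbf 1$.) The triangle inequality then gives
\[
\rho_2(C)\le\alpha\rho_2(A)+(1-\alpha)\rho_2(B)\le\max\{\rho_2(A),\rho_2(B)\}.
\]

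\emph{Main step to check.} The argument depends on $A$ and $B$ being simultaneously self-adjoint with respect to the \emph{same} inner product. This is exactly what the equal-degree hypothesis provides. Without it, $C$ is not a convex combination of $A$ and $B$, and their invariant subspaces do not coincide, so the argument fails. I would also verify that the variational characterization correctly captures negative eigenvalues, which it does because the supremum is taken of the absolute value of the quadratic form.
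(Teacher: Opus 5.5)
Your proposal is correct and follows essentially the paper's route. The lower bound comes from $\mathrm{tr}(C)=0$; your triangle-inequality step replaces the paper's pigeonhole step and does not need real eigenvalues. For the upper bound, degree matching gives $C=\alpha A+(1-\alpha)B$, and both arguments then use the common symmetrization $D^{-1/2}W_XD^{-1/2}$ (equivalently, self-adjointness in $\langle\cdot,\cdot\rangle_\pi$) with a variational characterization on the complement of $\mathbf{1}$. The paper applies Courant--Fischer separately to $\lambda_2(C)$ and $\lambda_N(C)$, while your absolute-value quadratic form treats both spectral ends at once and directly yields the slightly sharper $\rho_2(C)\le\alpha\rho_2(A)+(1-\alpha)\rho_2(B)$.
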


\noindent\textit{Proof sketch.} 
For the lowerbound, we apply Vieta's formulas to the characteristic polynomial, followed by noting the Perron eigenvalue of $\lambda_1(C)=-1$ and invoking the pigeonhole principle. For the upperbound, we first establish a special relation between $A, B, C$ under the degree-matching condition, and then apply the Courant–Fischer Theorem to bound the SLEM.

\noindent\textit{Intuitive interpretation.} First, the lower bound $\tfrac{1}{N-1}$ states that convergence cannot be made arbitrarily fast, regardless of connectivity. While one might expect a sharper bound such as $\min\{\rho_2(A), \rho_2(B)\}$, this fails in general since the maximizer of the Rayleigh quotient for $C$ may drive $r_C(v)$ arbitrarily close to zero. On the other hand, obtaining a favorable upper bound (i.e, keeping the SLEM of the merged matrix $C$ small), requires matching degree sequences; notably, 
conflicting degree structures between layers slows down mixing.

To complete our analysis of convergence rate in this model, we provide Example~\ref{ex:necessity} below to demonstrate that the degree-matching condition $d^{(1)}_{i} = d^{(2)}_{i}, \forall i$ in Proposition~\ref{prop:slem_bound} is necessary for the merged matrix $C$'s SLEM upper bound, and Example~\ref{ex:convergence_rate} below to show that the lower bound is tight. 

\begin{example}
\label{ex:necessity}
Let $\alpha = \frac{1}{2}$, and consider the matrices
\begin{scriptsize}
\begin{align*}
W_1 = \begin{pmatrix}
0  & 50 &  1 &  1 &  2 & 40\\
50 &  0 &  3 &  1 & 50 & 50\\
1  &  3 &  0 & 40 & 40 &  2\\
1  &  1 & 40 &  0 & 40 &  3\\
2  & 50 & 40 & 40 &  0 &  1\\
40 & 50 &  2 &  3 &  1 &  0
\end{pmatrix}, 
W_2 = \begin{pmatrix}
0  &  1 &  3 &  1 &  1 &  1\\
1  &  0 &  1 &  2 &  1 &  3\\
3  &  1 &  0 & 50 & 40 &  3\\
1  &  2 & 50 &  0 & 50 &  2\\
1  &  1 & 40 & 50 &  0 &  1\\
1  &  3 &  3 &  2 &  1 &  0
\end{pmatrix}
\end{align*}
\end{scriptsize}
Here, the degrees $d_i(j)$ differ between the two layers. Also, 
\begin{equation*}
0.6928 =\rho_2(C)> \max\{ \rho_2(A), \rho_2(B)\}=\max\{ 0.6839, 0.5338\}.
\end{equation*}
which violates the upper bound and confirms the necessity of the matching degree condition. 
\end{example}

\begin{example}
\label{ex:convergence_rate}
Let $\alpha = \frac{1}{2}$, and consider the matrices
{\begin{scriptsize}
\begin{align*}
    W_1 = \begin{pmatrix}
0 & \frac{1}{2} & 0 & 0 & \frac{1}{2} \\
\frac{1}{2} & 0 & \frac{1}{2} & 0 & 0 \\
0 & \frac{1}{2} & 0 & \frac{1}{2} & 0 \\
0 & 0 & \frac{1}{2} & 0 & \frac{1}{2} \\
\frac{1}{2} & 0 & 0 & \frac{1}{2} & 0
\end{pmatrix}, ~~~
 W_2= \begin{pmatrix}
0 & 0 & \frac{1}{2} & \frac{1}{2} & 0 \\
0 & 0 & 0 & \frac{1}{2} & \frac{1}{2} \\
\frac{1}{2} & 0 & 0 & 0 & \frac{1}{2} \\
\frac{1}{2} & \frac{1}{2} & 0 & 0 & 0 \\
0 & \frac{1}{2} & \frac{1}{2} & 0 & 0
\end{pmatrix}.
\end{align*}
\end{scriptsize}}
Both matrices have eigenvalues $\cos\left(\frac{2k\pi}{5}\right)$ for $k = 0, 1, 2, 3, 4$. Then, the SLEM of $C$ is $\rho_2(C) = \frac{1}{4}$, achieving the lower bound $\frac{1}{N-1}$ with equality. This demonstrates that merging two slow-mixing sparse layers can yield a fast-mixing dense layer. 
\end{example}

\subsection{Stability of Opinion Dynamics}
\label{sec:stability_merged}

Lastly, we establish two complementary stability results for the merged model. 
First, we show that when $\alpha$ approaches 1, so that layer $G_2$ is effectively ignored, the merged consensus converges \emph{linearly} to the single-layer consensus of layer $G_1$. 
\begin{proposition}
\label{prop:alpha_stability}
Let $x^1(\infty)$ denote the consensus of layer $A$ in isolation. As $\alpha\to1^-$, 
\[\bigl|x^m(\infty)-x^1(\infty)\bigr| = O\bigl(1-\alpha\bigr).\]
\end{proposition}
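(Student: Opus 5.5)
The plan is to write $x^m(\infty)=\boldsymbol{\pi}^{(m)}\mathbf{x}(0)$ and $x^1(\infty)=\boldsymbol{\pi}^{(1)}\mathbf{x}(0)$, where $\boldsymbol{\pi}^{(m)}$ and $\boldsymbol{\pi}^{(1)}$ are the stationary distributions of $C$ and $A$. Then
\[
\bigl|x^m(\infty)-x^1(\infty)\bigr|\le \|\boldsymbol{\pi}^{(m)}-\boldsymbol{\pi}^{(1)}\|_1\,\|\mathbf{x}(0)\|_{\max}\le N\,\|\boldsymbol{\pi}^{(m)}-\boldsymbol{\pi}^{(1)}\|_{\max},
\]
since $x_i(0)\in[0,1]$. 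The claim therefore reduces to showing $\|\boldsymbol{\pi}^{(m)}-\boldsymbol{\pi}^{(1)}\|_{\max}=O(1-\alpha)$. I take $A$ primitive, as it must be for $x^1(\infty)$ to be defined; by Proposition~\ref{prop:merge_primitive}, $C$ is then primitive, so both stationary distributions are unique.

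The cleanest route uses the closed form~\eqref{eq:stationary_dist}. Since $G_m$ is undirected with weights $w^{(m)}_{ij}=\alpha w^{(1)}_{ij}+(1-\alpha)w^{(2)}_{ij}$, we have
\[
\pi^{(m)}_i=\frac{\alpha d^{(1)}_i+(1-\alpha)d^{(2)}_i}{\alpha D_1+(1-\alpha)D_2},\qquad D_\ell=\sum_j d^{(\ell)}_j .
\]
This is a smooth rational function of $\alpha$ near $\alpha=1$, and its denominator is bounded below by $\min\{D_1,D_2\}>0$. Subtracting $\pi^{(1)}_i=d^{(1)}_i/D_1$ gives
\[
\pi^{(m)}_i-\pi^{(1)}_i=\frac{(1-\alpha)\bigl(d^{(2)}_iD_1-d^{(1)}_iD_2\bigr)}{D_1\bigl(\alpha D_1+(1-\alpha)D_2\bigr)},
\]
which is explicitly $O(1-\alpha)$. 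Multiplying by $\mathbf{x}(0)$ finishes the argument, and it even yields the exact first-order coefficient.

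As a cross-check that fits the paper's perturbation framework, I would also invoke Corollary~\ref{cor:stability_bound} with $P=A$ and $\tilde P=C$. This requires bounding $\|C-A\|_{\max}$. Write
\[
C_{ij}=\frac{\alpha w^{(1)}_{ij}+(1-\alpha)w^{(2)}_{ij}}{\alpha d^{(1)}_i+(1-\alpha)d^{(2)}_i}.
\]
Then
\[
C_{ij}-A_{ij}=\frac{(1-\alpha)\bigl(w^{(2)}_{ij}d^{(1)}_i-w^{(1)}_{ij}d^{(2)}_i\bigr)}{d^{(1)}_i\bigl(\alpha d^{(1)}_i+(1-\alpha)d^{(2)}_i\bigr)},
\]
which is bounded by $(1-\alpha)\cdot$const. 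This uses $d^{(1)}_i>0$, which holds because $A$ is primitive, so every row has an edge.

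The only subtle point in the cross-check is that the constant hidden in Corollary~\ref{cor:stability_bound} comes from $\|Z\|$, where $Z=(I-A+\mathbf{1}\boldsymbol{\pi}^T)^{-1}$ depends only on $A$. It is therefore uniform in $\alpha$, which is what makes the bound linear in $1-\alpha$ rather than merely vanishing. I expect this uniformity to be the main step requiring care in the perturbation route. The explicit-formula route sidesteps it entirely, so I would present that as the main proof.
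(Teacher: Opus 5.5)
Your main argument is correct and is essentially the paper's proof. The paper uses the same degree-proportional closed form for the merged stationary distribution, but aggregated as $x^m(\infty)-x^1(\infty)=\frac{(1-\alpha)|E_2|\,(x^2(\infty)-x^1(\infty))}{\alpha|E_1|+(1-\alpha)|E_2|}$, which is what you get by summing your entrywise identity against $\mathbf{x}(0)$ with $D_\ell=2|E_\ell|$; it then bounds the denominator below by $\min\{|E_1|,|E_2|\}$. That aggregation avoids your factor of $N$ without changing anything essential, and your perturbation cross-check, though valid because $Z$ depends only on $A$, is not needed.
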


Second, building on Corollary~\ref{cor:stability_bound}, we find that if $B$ is a small perturbation of $A$, the change in consensus is of the same order as the matrix perturbation. 
\begin{proposition}
\label{prop:perturbation_merged}
If $\|A - B\|_{\max}$ is sufficiently small, then
\[
\bigl|x^m(\infty) - x^1(\infty)\bigr|
= O\bigl(\|A - B\|_{\max}\bigr).
\]
\end{proposition}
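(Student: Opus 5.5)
The plan is to reduce the statement to Corollary~\ref{cor:stability_bound}, applied with $P=A$ and $\tilde P=C$. Since the consensus values are linear in the stationary distributions, namely $x^m(\infty)=\sum_j \pi^{(m)}_j x_j(0)$ and $x^1(\infty)=\sum_j \pi^{(1)}_j x_j(0)$ with $x_j(0)\in[0,1]$, we get
\[
\bigl|x^m(\infty)-x^1(\infty)\bigr|\le \sum_j \bigl|\pi^{(m)}_j-\pi^{(1)}_j\bigr|\le N\,\|\boldsymbol{\pi}^{(m)}-\boldsymbol{\pi}^{(1)}\|_{\max}.
\]
By Proposition~\ref{prop:merge_primitive}, $C$ is primitive and so has a unique stationary distribution. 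Corollary~\ref{cor:stability_bound} then gives $\|\boldsymbol{\pi}^{(m)}-\boldsymbol{\pi}^{(1)}\|_{\max}\lesssim\|C-A\|_{\max}$, where the constant depends only on $Z=(I-A+\mathbf 1\boldsymbol\pi^T)^{-1}$, which is fixed by $A$. It therefore suffices to show $\|C-A\|_{\max}=O(\|A-B\|_{\max})$.

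For that step, write $d^{(m)}_i=\alpha d^{(1)}_i+(1-\alpha)d^{(2)}_i$ and note that $w^{(1)}_{ij}=d^{(1)}_iA_{ij}$ and $w^{(2)}_{ij}=d^{(2)}_iB_{ij}$. This gives the key identity
\[
C_{ij}=\beta_i A_{ij}+(1-\beta_i)B_{ij},\qquad \beta_i=\frac{\alpha d^{(1)}_i}{\alpha d^{(1)}_i+(1-\alpha)d^{(2)}_i}\in(0,1),
\]
so $C$ is a row-wise convex combination of $A$ and $B$. Consequently $C_{ij}-A_{ij}=(1-\beta_i)(B_{ij}-A_{ij})$, and therefore $\|C-A\|_{\max}\le\|A-B\|_{\max}$. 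Combining this with the first paragraph gives $|x^m(\infty)-x^1(\infty)|\le N\,c_A\,\|A-B\|_{\max}$, where $c_A$ bounds the entries of $Z$ up to a factor $N$.

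The only delicate point is whether the hidden constant in Corollary~\ref{cor:stability_bound} is uniform. Theorem~\ref{thm:schweitzer} expresses the perturbation through $Z$ built from the \emph{unperturbed} matrix $P=A$. Taking $A$ as the base matrix keeps the constant independent of $B$, so the bound is genuinely linear in $\|A-B\|_{\max}$. This requires $A$ to be primitive, or at least to have a unique stationary distribution so that $Z$ exists; I would state this as the standing assumption that layer $G_1$ reaches consensus in isolation, which is implicit since $x^1(\infty)$ is referred to. The hypothesis ``sufficiently small'' is then not needed for the estimate itself. It only serves to guarantee, if one instead wants to use $C$ as the base matrix, that $C$ remains close enough to $A$ for $Z_C$ to be uniformly bounded (by continuity of matrix inversion). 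I would remark on this rather than rely on it.
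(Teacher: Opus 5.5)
Your proof is correct and follows the route the paper indicates. The appendix gives no separate proof of this proposition, only the remark that it builds on Corollary~\ref{cor:stability_bound}, which matches the template of the proof of Proposition~\ref{prop:perturbation_stability}: bound the deviation of the governing matrix from $A$ in max norm, then invoke the corollary. Your row-wise identity $C_{ij}=\beta_iA_{ij}+(1-\beta_i)B_{ij}$ supplies the step $\|C-A\|_{\max}\le\|A-B\|_{\max}$ that the paper leaves implicit. You are also right that the constant depends only on $A$, so the smallness hypothesis is not needed; this holds even with the exact identity $\tilde{\boldsymbol{\pi}}-\boldsymbol{\pi}=\tilde{\boldsymbol{\pi}}EZ$, because $\tilde{\boldsymbol{\pi}}$ is a probability vector.
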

\section{Switching Layers Opinion Dynamics}
\label{switching}

\subsection{Switching Layers Coordination Game Model}\label{subsec:switched_multilayer}

In an alternative view of dynamic social systems, agents can be seen as \emph{alternating} between interaction modalities (e.g., more commonly interacting over social media platforms and only intermittently exchanging opinions in person). We model this behavior with a periodic switching mechanism between $G_1$ and $G_2$, where $G_1$ and $G_2$ are defined as earlier.

Specifically, we assume agents' opinions evolve in $G_1$ under matrix $A$ for $k$ steps. Then, agents switch to $G_2$ and their opinions evolve under matrix $B$ for one step, before again switching back to $G_1$, with this alternating process repeated. The cost function for agent $i$ at time $t$ is defined as being dependent only on the active layer:
\begin{align*}
J_i(x_i, \mathbf{x}_{\partial_i^{(1)}}, \mathbf{x}_{\partial_i^{(2)}}, t) &= \notag\\
&\hspace{-1.1in}\begin{cases}
\frac{1}{2} \sum_{j \in \partial_i^{(1)}} w_{ij}^{(1)} (x_i - x_j)^2 & \text{if } (t+1) \bmod (k+1) \neq 0, \\
\frac{1}{2} \sum_{j \in \partial_i^{(2)}} w_{ij}^{(2)} (x_i - x_j)^2 & \text{otherwise},
\end{cases} 
\end{align*}
where $k+1 \geq 2$ is the \emph{switching period}. Larger $k$ reflect scenarios where platform $G_1$ dominates agents' time. 

Similar to before, we can find the best-response strategies, following which the opinion dynamics can be expressed as
\begin{equation}
\mathbf{x}(t) =
\begin{cases}
A \mathbf{x}(t-1) & \text{if } t \bmod (k+1) \neq 0, \\
B \mathbf{x}(t-1) & \text{otherwise},
\end{cases} \label{eq:switching_dynamics}
\end{equation}
with the opinion vector at time $t$:
\begin{equation}
\mathbf{x}(t) = \big( \prod_{s=1}^t M(s) \big) \mathbf{x}(0), \label{eq:switching_iterated_dynamics}
\end{equation}
where 
$M(s) =
\begin{cases}
A & \text{if } s \bmod (k+1) \neq 0, \\
B & \text{otherwise}.
\end{cases}$

\subsection{Consensus}
\label{subsec:switching_consensus}

We first show that in this switching model, even if $A$ and $B$ are primitive (which ensures convergence to a consensus on each layer in isolation), the product $\prod_{s=1}^t M(s)$ in \eqref{eq:switching_iterated_dynamics} may not converge because primitivity is not preserved under multiplication, thereby impeding consensus on the multiplex network. The following example illustrates this.

\begin{example}
\label{ex:non_convergence}
Consider a switching period $k+1=2$ with adjacency matrices for $G_1$ and $G_2$:

{\begin{scriptsize}
\begin{align*}
W_1 &= \begin{pmatrix}
0 & 1 & 1 & 1 & 0 \\
1 & 0 & 0 & 1 & 0 \\
1 & 0 & 0 & 0 & 0 \\
1 & 1 & 0 & 0 & 1 \\
0 & 0 & 0 & 1 & 0
\end{pmatrix}, \quad
W_2 = \begin{pmatrix}
0 & 0 & 0 & 0 & 1 \\
0 & 0 & 1 & 0 & 1 \\
0 & 1 & 0 & 1 & 1 \\
0 & 0 & 1 & 0 & 0 \\
1 & 1 & 1 & 0 & 0
\end{pmatrix}.
\end{align*}
\end{scriptsize}}
A direct calculation shows that the product $\prod_{s=1}^t M(s)$ does not converge, hence consensus does not exist.
\end{example}
 
Despite this negative result, there are still sufficient conditions under which consensus can emerge in switching layer coordination games, as shown by the following proposition. 

\begin{proposition}
\label{prop:switching_consensus}
Assume the matrix $BA^k$ is primitive with a unique stationary distribution $\boldsymbol{\pi}$. 
Then, the switching dynamics \eqref{eq:switching_dynamics} converges to consensus with stationary distribution $\boldsymbol{\pi}$. 
\end{proposition}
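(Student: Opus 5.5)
The plan is to reduce the switching dynamics to a time-homogeneous chain sampled once per period, and then handle the steps inside a period by a boundedness argument. Write $P := BA^k$. By \eqref{eq:switching_dynamics}, for every $n \geq 0$ and $0 \leq r \leq k$,
\[
\mathbf{x}(n(k+1)+r) = A^r P^n \mathbf{x}(0),
\]
because one period applies $A$ $k$ times and then $B$ once, and the product acting on $\mathbf{x}(0)$ composes with later factors on the left. The first thing I will do is check this ordering carefully against \eqref{eq:switching_iterated_dynamics}. That product is written as $\prod_{s=1}^t M(s)$, but it must be read with later matrices on the left to agree with \eqref{eq:switching_dynamics}.

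\textbf{Step 1: the sampled chain.} $P$ is a product of row-stochastic matrices, so it is row-stochastic. By hypothesis it is primitive with unique stationary distribution $\boldsymbol{\pi}$. The classical Perron--Frobenius/Markov chain convergence theorem, as used in Section~\ref{sec:model_setting}(c), then gives $P^n \to \mathbf{1}\boldsymbol{\pi}$ as $n\to\infty$, geometrically fast with rate governed by the second largest eigenvalue modulus of $P$. Note that Lemma~\ref{lem:convergence_rate} cannot be cited verbatim, because $P$ need not be reversible. Only the limit is needed here, and the general primitive-matrix limit theorem provides it. Consequently $P^n\mathbf{x}(0) \to (\boldsymbol{\pi}\mathbf{x}(0))\mathbf{1}$.

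\textbf{Step 2: intermediate times.} For each fixed residue $r\in\{0,\dots,k\}$,
\[
A^rP^n\mathbf{x}(0) \to A^r\mathbf{1}\,\boldsymbol{\pi}\mathbf{x}(0) = (\boldsymbol{\pi}\mathbf{x}(0))\mathbf{1},
\]
since $A^r\mathbf{1}=\mathbf{1}$ by stochasticity. There are only finitely many residues. Also $\|A^r\|_\infty = 1$, so the error at an intermediate time is bounded by the error at the last sampled time:
\[
\|\mathbf{x}(n(k+1)+r) - c\mathbf{1}\|_{\max} \le \|P^n\mathbf{x}(0) - c\mathbf{1}\|_{\max}, \qquad c := \boldsymbol{\pi}\mathbf{x}(0).
\]
Therefore the whole sequence $\mathbf{x}(t)$ converges to the consensus $c\mathbf{1}$. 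This consensus is weighted by $\boldsymbol{\pi}$, as claimed.

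\textbf{Main obstacle.} Little here is genuinely hard. The main care points are the ordering of the product and the fact that $BA^k$ is generally not reversible, so the $\pi$-norm machinery of Lemma~\ref{lem:convergence_rate} does not apply directly. If a rate is desired, I would use the Jordan decomposition of $P$ and bound $\|P^n - \mathbf{1}\boldsymbol{\pi}\|$ by $C(\rho_2(P)+\epsilon)^n$.

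A subtlety concerns the phase convention. If the product is instead read as $A^kB$ per period, the stationary distribution of $A^kB$ differs from that of $BA^k$. The two are related by $\boldsymbol{\pi}' = \boldsymbol{\pi}A^k$. I would therefore state the result under the convention matching \eqref{eq:switching_dynamics}, where the first $k$ steps use $A$, so each period multiplies by $P=BA^k$.
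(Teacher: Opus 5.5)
Your proposal is correct and follows the same route as the paper's proof: write $t=n(k+1)+r$ so that the product is $A^r(BA^k)^n$, use primitivity of $BA^k$ to get $(BA^k)^n\to\mathbf{1}\boldsymbol{\pi}$, and handle the leftover factor $A^r$ via $A^r\mathbf{1}=\mathbf{1}$ and $\|A^r\|_\infty=1$ (your caution that a clean $C\rho_2(BA^k)^n$ rate needs care when $BA^k$ is not diagonalizable also refines the paper's mixing-time citation). One small slip in your aside: if $\boldsymbol{\pi}'$ is stationary for $A^kB$, then $\boldsymbol{\pi}'=\boldsymbol{\pi}B$ and $\boldsymbol{\pi}=\boldsymbol{\pi}'A^k$, not $\boldsymbol{\pi}'=\boldsymbol{\pi}A^k$, though this does not affect the main argument.
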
 
\noindent\textit{Proof sketch.} The proof leverages mixing-time results (\cite{bremaud2013markov, aldous-fill-2014, levin2017markov}) to bound $\|A^r(BA^k)^t - \mathbf{1}\boldsymbol{\pi}^T\|_{\max}$, the deviation of the opinion at any time $t$ from $\boldsymbol{\pi}$. It then invokes the Perron-Frobenius theorem to show that this bound converges to zero exponentially, establishing convergence to consensus $\boldsymbol{\pi}$. 

\noindent\textit{Intuitive interpretation.} $BA^{k}$ represents one communication round: $k$ steps on $G_1$ followed by one step on $G_2$. Primitivity of this combined operator is sufficient for consensus, which is a weaker condition than requiring primitivity of $A$ and $B$ individually. This is because alternating between layers can supply the connectivity that neither layer possesses alone. 

We next take a closer look at the nature of the resulting consensus. The stationary distribution \(\boldsymbol{\pi}\) of \(BA^{k}\) can be unexpectedly complex, as illustrated in Example~\ref{ex:stationary_distribution}: its components do not necessarily lie between those of \(\boldsymbol{\pi}_A\) and \(\boldsymbol{\pi}_B\), the consensus on each isolated layer.  

\begin{example}
\label{ex:stationary_distribution}
Consider adjacency matrices

{\begin{scriptsize}
\[W_1 = \begin{pmatrix}
0 & \tfrac12 & \tfrac12 \\
\tfrac12 & 0 & \tfrac12 \\
\tfrac12 & \tfrac12 & 0
\end{pmatrix}, 
\qquad
W_2 = \begin{pmatrix}
0 & 2 & 1 \\
2 & 0 & 1 \\
1 & 1 & 0
\end{pmatrix}.\]
\end{scriptsize}}

Then, for $k=1$, we have
\[
\boldsymbol{\pi}_A = \Bigl( \tfrac13, \tfrac13, \tfrac13 \Bigr), 
\
\boldsymbol{\pi}_B = \Bigl( \tfrac38, \tfrac38, \tfrac14 \Bigr), \boldsymbol{\pi}_{BA} = \Bigl( \tfrac{3}{10}, \tfrac{3}{10}, \tfrac{2}{5} \Bigr).
\]
Observe that
\[
\pi_{BA}(i) \;\notin\; 
\bigl[\min\{\pi_A(i),\pi_B(i)\},\,\max\{\pi_A(i),\pi_B(i)\}\bigr],
\]
so \(\boldsymbol{\pi}_{BA}\) does not interpolate between \(\boldsymbol{\pi}_A\) and \(\boldsymbol{\pi}_B\).
\end{example}

\subsection{Convergence Rate}
\label{sec:switching_convergence_rate}

As established in Proposition~\ref{prop:switching_consensus}, the matrix $BA^k$ governs the existence of consensus in the switching layer. Consequently, the convergence rate of the switching dynamics is determined by the SLEM of $BA^k$. Accordingly, the error
$\mathbf{e}(t)$
decays by a factor of $\rho_2(BA^k)$ per each length $k+1$ cycle. Thus, the effective per-step convergence rate is approximately $\sqrt[k+1]{\rho_2(BA^k)}$. To rigorously quantify the convergence rate, we invoke the following result.

\begin{proposition}[\cite{bajovic2013consensus}, Proposition 8]
\label{prop:product_slem}
Let $P_1, \dots, P_n$ be stochastic, reversible matrices with stationary distributions $\mu_1, \dots, \mu_n$. Define the product matrix
\(
P_{\mathrm{prod}}^n = P_n P_{n-1} \cdots P_1.
\)
Then, the SLEM satisfies
\begin{align*}
\left| \rho_2 \left( P_{\mathrm{prod}}^n \right) \right|
&\leq\\
\hspace{-0.32in}&
\left( \prod_{i=1}^n \left| \rho_2(P_i) \right| \right)
\left( \prod_{i=2}^n \max_j \frac{\mu_{i-1}(j)}{\mu_i(j)} \right)
\max_j \frac{\mu_n(j)}{\mu_1(j)}.
\end{align*}
\end{proposition}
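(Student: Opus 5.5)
The proof works in weighted $\ell^2$ spaces. Each reversible stochastic $P_i$ with stationary distribution $\mu_i$ is self-adjoint on $\ell^2(\mu_i)$, with inner product $\langle \mathbf{z},\mathbf{y}\rangle_{\mu_i}=\sum_j z_j y_j \mu_i(j)$. It fixes $\mathbf{1}$ and maps the orthogonal complement $\mathbf{1}^{\perp_{\mu_i}}$ into itself. By the spectral theorem, for every $\mathbf{z}$,
\[
\|P_i\mathbf{z}-\langle \mathbf{z},\mathbf{1}\rangle_{\mu_i}\mathbf{1}\|_{\mu_i}\le \rho_2(P_i)\,\|\mathbf{z}-\langle \mathbf{z},\mathbf{1}\rangle_{\mu_i}\mathbf{1}\|_{\mu_i}.
\]
Equivalently, the operator norm of $P_i$ on $\mathbf{1}^{\perp_{\mu_i}}$ is $\rho_2(P_i)$; this is the same fact underlying Lemma~\ref{lem:convergence_rate}. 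The plan is to chain these contractions through the product. The catch is that the norm changes from $\|\cdot\|_{\mu_i}$ to $\|\cdot\|_{\mu_{i+1}}$ at each step, and each change is paid for by a ratio of stationary weights.

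The first tool is a norm comparison: $\|\mathbf{z}\|_{\mu}^2=\sum_j z_j^2\mu(j)\le \max_j\frac{\mu(j)}{\nu(j)}\|\mathbf{z}\|_{\nu}^2$. The second is that centering is a best approximation: $\|\mathbf{z}-c_\mu\mathbf{1}\|_\mu=\min_c\|\mathbf{z}-c\mathbf{1}\|_\mu$, where $c_\mu=\langle \mathbf{z},\mathbf{1}\rangle_\mu$. So I would measure disagreement by the seminorm $V_\mu(\mathbf{z})=\min_c\|\mathbf{z}-c\mathbf{1}\|_\mu$. Each factor then gives
\[
V_{\mu_i}(P_i\mathbf{z})\le\rho_2(P_i)V_{\mu_i}(\mathbf{z}),
\]
because $P_i(\mathbf{z}-c\mathbf{1})=P_i\mathbf{z}-c\mathbf{1}$, and the contraction bound applies with the optimal $c$. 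Each transition between factors gives
\[
V_{\mu_{i}}(\mathbf{y})\le \Big(\max_j\tfrac{\mu_{i}(j)}{\mu_{i-1}(j)}\Big)^{1/2}V_{\mu_{i-1}}(\mathbf{y}).
\]
Iterating from $P_1$ to $P_n$ and finally returning to the $\mu_1$ seminorm bounds $V_{\mu_1}(P^n_{\mathrm{prod}}\mathbf{z})$ by $\prod_i\rho_2(P_i)$ times the product of the ratio factors, times $V_{\mu_1}(\mathbf{z})$.

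To pass to the spectrum, let $\lambda\neq 1$ be an eigenvalue of $P^n_{\mathrm{prod}}$ with eigenvector $\mathbf{v}$, which may be complex; I would extend the seminorms to $\mathbb{C}^N$. The eigenvector $\mathbf{v}$ is not a constant multiple of $\mathbf{1}$, since $P^n_{\mathrm{prod}}\mathbf{1}=\mathbf{1}$. Hence $V_{\mu_1}(\mathbf{v})>0$. Moreover $V_{\mu_1}(\lambda\mathbf{v})=|\lambda|V_{\mu_1}(\mathbf{v})$, and dividing gives $|\lambda|$ at most the constant. If the eigenvalue $1$ has multiplicity greater than one, the same argument applied to a generalized eigenvector, or a small perturbation argument, handles it.

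The main obstacle is matching the exact constants in the statement: plain products of ratios, not square roots, with the indexing $\mu_{i-1}/\mu_i$ and a final factor $\mu_n/\mu_1$. My chain naturally produces square roots of ratios of the form $\mu_i/\mu_{i-1}$ and $\mu_1/\mu_n$. Since every maximum ratio is at least $1$, each square root is bounded by the ratio itself, so it suffices to obtain the ratios in the stated orientation. If my orientation is reversed, I would run the argument on the adjoint, or equivalently the transpose, product $P_1^T\cdots P_n^T$ acting on measures, i.e.\ on row vectors with $\ell^2(1/\mu_i)$ norms. This has the same spectrum, and the comparison of $\sum_j x_j^2/\mu_{i}(j)$ against $\sum_j x_j^2/\mu_{i-1}(j)$ produces exactly $\max_j\mu_{i-1}(j)/\mu_i(j)$, with the closing factor $\max_j\mu_n(j)/\mu_1(j)$. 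Carefully tracking which of the two dual pictures yields the stated orientation is the step I expect to require the most care.
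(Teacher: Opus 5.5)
The paper does not prove this statement; it quotes it from \cite{bajovic2013consensus}, so your argument has to stand on its own. Most of it does. The seminorm $V_\mu$, the per-factor contraction, the norm comparisons and the eigenvalue step are all sound. The eigenvalue step is cleanest via the operator induced on $\mathbb{C}^N/\operatorname{span}(\mathbf 1)$, whose spectrum is that of $P^n_{\mathrm{prod}}$ with one copy of $1$ removed, so no generalized eigenvectors are needed. Together these steps correctly prove
\[
\rho_2(P_n\cdots P_1)\le\prod_{i=1}^n\rho_2(P_i)\,\prod_{i=2}^n\Bigl(\max_j\frac{\mu_i(j)}{\mu_{i-1}(j)}\Bigr)^{1/2}\Bigl(\max_j\frac{\mu_1(j)}{\mu_n(j)}\Bigr)^{1/2}.
\]
The gap is the proposed reorientation: transposing does not change the orientation. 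In $P_1^T\cdots P_n^T$ the factor $P_n^T$ acts first, so the weights are visited in the order $\mu_n,\mu_{n-1},\dots,\mu_1,\mu_n$. Passing from $\ell^2(1/\mu_i)$ to $\ell^2(1/\mu_{i-1})$ then requires $\|\mathbf x\|_{1/\mu_{i-1}}\le C\|\mathbf x\|_{1/\mu_i}$, i.e.\ $C^2=\max_j\mu_i(j)/\mu_{i-1}(j)$ again; by duality the constant is unchanged. The comparison you wrote down (bounding $\sum_j x_j^2/\mu_i(j)$ by $\sum_j x_j^2/\mu_{i-1}(j)$) is the one needed when $P_{i-1}^T$ acts before $P_i^T$. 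That is the situation for the reversed product $P_1P_2\cdots P_n$, whose spectrum need not match that of $P_n\cdots P_1$ once $n\ge3$.

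No other argument can close this gap, because for $n\ge4$ the inequality as stated here (with $P_n\cdots P_1$) is false. Here is a sketch of a counterexample.
\begin{itemize}
\item Put $P_i=\mathbf 1\mu_i^T+\delta R_i$ with $R_i=D_i^{-1/2}S_iD_i^{1/2}$, where $D_i=\operatorname{diag}(\mu_i)$, $S_i$ is symmetric with spectral norm at most $1$, and $S_i\sqrt{\mu_i}=0$. For small $\delta$ these matrices are positive, reversible with respect to $\mu_i$, and have $\rho_2(P_i)\le\delta$.
\item Since $R_i\mathbf 1=0$, expanding gives $P_n\cdots P_1=\mathbf 1\mathbf w^T+\delta^nR_n\cdots R_1$. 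Hence $\rho_2(P_n\cdots P_1)=\delta^n\rho(R_n\cdots R_1)$.
\item Take $N=n+1$ states. Set $\mu_i(j)=\epsilon e^{-L\tau/(n-1)}$ with $\tau=(i-j)\bmod n$ for $j\le n$, and put the remaining mass on state $n+1$.
\item Let $S_i$ be the compression of $\mathbf e_i\mathbf e_{i+1}^T+\mathbf e_{i+1}\mathbf e_i^T$ (indices mod $n$) to $\sqrt{\mu_i}^{\perp}$.
\item After conjugating by $D_1^{1/2}$, the product interleaves the $S_i$ with the $\epsilon$-independent diagonals $(D_iD_{i-1}^{-1})^{1/2}$. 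Letting $\epsilon\to0$ gives $\rho(R_n\cdots R_1)\to\prod_i(\mu_i(i)/\mu_i(i+1))^{1/2}=e^{nL/2}$.
\item Every factor $\max_j\mu_{i-1}(j)/\mu_i(j)$, the closing one included, equals $e^{L/(n-1)}$.
\end{itemize}
So the left side is about $\delta^ne^{nL/2}$, while the right side is at most $\delta^ne^{nL/(n-1)}$, which is strictly smaller whenever $n\ge4$ and $L>0$.

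Your bound does suffice where the paper needs it. For $n=2$, used in Proposition~\ref{prop:switching_convergence} with $P_1=A^k$ and $P_2=B$, the stated constant is symmetric, and each square root of a ratio $\ge1$ is at most the ratio. For $n=3$, evaluate the cyclic identity $\sum_i\log(\mu_i(j)/\mu_{i-1}(j))=0$ at a maximizer of each $\mu_i/\mu_{i-1}$. This shows the product of the stated ratios, squared, dominates the product of yours, so the stated constant dominates yours. For general $n$, state the result in your orientation, or for the product $P_1P_2\cdots P_n$.
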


Applying this to our model leads to a convergence bound. 

\begin{proposition}
\label{prop:switching_convergence}
Under the alternating application of matrix $A$ for $k$ steps followed by matrix $B$, the error satisfies
\[
\|\mathbf{e}(t)\|_{\max} \lesssim \rho_*^{\left\lfloor \frac{t}{k+1} \right\rfloor} \|\mathbf{e}(0)\|_\pi,
\]
where $\rho_* = \rho_2(B) \cdot \rho_2(A)^k \cdot \max_i \frac{d^{(1)}_{i}}{d^{(2)}_{i}} \cdot \max_i \frac{d^{(2)}_{i}}{d^{(1)}_{i}}$. 
\end{proposition}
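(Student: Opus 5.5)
The plan is to apply Proposition~\ref{prop:product_slem} to one full switching cycle and then iterate it, using Proposition~\ref{prop:switching_consensus} and the argument behind Lemma~\ref{lem:convergence_rate} to turn the SLEM bound into an error bound.

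\textbf{Step 1: bound the SLEM of one cycle.} Take $n=k+1$ with $P_1=\cdots=P_k=A$ and $P_{k+1}=B$, so that $P^{k+1}_{\mathrm{prod}}=BA^k$. Both $A$ and $B$ are random walks on undirected weighted graphs, hence reversible, with stationary distributions $\mu_A(j)=d^{(1)}_j/2|E_1|$ and $\mu_B(j)=d^{(2)}_j/2|E_2|$ by \eqref{eq:stationary_dist}. In the product of ratio terms, the consecutive $A$--$A$ factors are all $1$. Only the $A\to B$ transition contributes, through $\max_j \mu_A(j)/\mu_B(j)$. The closing factor is $\max_j \mu_B(j)/\mu_A(j)$. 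The normalizations $2|E_1|,2|E_2|$ appear inverted in the two factors and cancel. Hence
\[
\rho_2(BA^k)\le \rho_2(B)\rho_2(A)^k \max_i \frac{d^{(1)}_i}{d^{(2)}_i}\max_i\frac{d^{(2)}_i}{d^{(1)}_i}=\rho_*.
\]

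\textbf{Step 2: iterate over complete cycles.} Write $t=(k+1)q+r$ with $q=\lfloor t/(k+1)\rfloor$ and $0\le r\le k$. Then $\mathbf{x}(t)=A^r(BA^k)^q\mathbf{x}(0)$. By Proposition~\ref{prop:switching_consensus} (assuming $BA^k$ primitive), the consensus value is $\mathbf{x}(\infty)=\mathbf{1}\boldsymbol{\pi}^T\mathbf{x}(0)$ with $\boldsymbol{\pi}$ stationary for $BA^k$. Since $A^r\mathbf{1}=\mathbf{1}$, we get
\[
\mathbf{e}(t)=A^r\bigl((BA^k)^q-\mathbf{1}\boldsymbol{\pi}^T\bigr)\mathbf{x}(0)=A^r(BA^k)^q\mathbf{e}(0).
\]
The second equality uses $\boldsymbol{\pi}^T\mathbf{e}(0)=0$ and $(BA^k)^q\mathbf{1}=\mathbf{1}$. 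The trailing $A^r$ is a stochastic matrix, so $\|A^r\mathbf{v}\|_{\max}\le\|\mathbf{v}\|_{\max}$, and it can be dropped at no cost.

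\textbf{Step 3: convert to an error bound.} Apply the Lemma~\ref{lem:convergence_rate}-type estimate to $P=BA^k$, giving $\|(BA^k)^q\mathbf{e}(0)\|\lesssim\rho_2(BA^k)^q\|\mathbf{e}(0)\|_\pi\le\rho_*^q\|\mathbf{e}(0)\|_\pi$. Then pass from $\|\cdot\|_\pi$ to $\|\cdot\|_{\max}$ exactly as in the corollary, using $\min_j\pi_j>0$; the resulting constant depends only on the graphs.

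\textbf{Main obstacle.} The difficulty is that $BA^k$ is generally not reversible, so Lemma~\ref{lem:convergence_rate} does not apply verbatim. An eigenvalue-modulus bound alone gives only $\|(BA^k)^q-\mathbf{1}\boldsymbol{\pi}^T\|\le c\,q^{N}\rho_2(BA^k)^q$ via the Jordan form.

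I would handle this in two stages. First, I would check whether the proof of Proposition~8 in \cite{bajovic2013consensus} in fact bounds an operator norm of the product restricted to $\mathbf{1}^\perp$. That bound would come from chaining the $\|\cdot\|_{\mu_i}$ norms: each reversible factor contracts by $\rho_2(P_i)$ in its own norm, and each change of norm costs at most a square-root ratio factor. If so, that norm bound is submultiplicative and iterates directly to $\rho_*^q$. Otherwise, I would fall back on Gelfand's formula, which gives the rate $(\rho_2(BA^k)+\epsilon)^q$ with the polynomial factor absorbed into $\lesssim$. I would state the result as holding up to a graph-dependent constant in that sense.
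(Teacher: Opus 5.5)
Your Steps 1 and 2 match the paper's proof. The paper feeds the two factors $A^k$ and $B$ into the product-SLEM proposition, using $\rho_2(A^k)=\rho_2(A)^k$, whereas you use $k+1$ factors; the outcome is the same once the $|E_1|,|E_2|$ normalizations cancel. Both proofs also discard the trailing $A^r$ by stochasticity.

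The genuine difference is Step 3. The paper simply asserts $\|(BA^k)^n\mathbf{e}(0)\|_{\max}\lesssim\rho_2(BA^k)^n\|\mathbf{e}(0)\|_\pi$. That is, it applies the Lemma~\ref{lem:convergence_rate} estimate to $BA^k$ as though $BA^k$ were reversible. Your objection is legitimate: for a non-reversible, possibly defective $BA^k$, the SLEM controls the decay only up to a polynomial factor.

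Your first remedy is the right one, and it is self-contained; you do not need to inspect the proof in \cite{bajovic2013consensus}. Make it precise with the seminorm $V_\mu(\mathbf{v})=\min_c\|\mathbf{v}-c\mathbf{1}\|_\mu$, i.e.\ work on the quotient by $\mathrm{span}\{\mathbf{1}\}$. Do not work on a fixed $\mathbf{1}^\perp$, because the $\mu_A$- and $\mu_B$-orthogonal complements of $\mathbf{1}$ differ. For reversible $P$ with stationary $\mu$ one has $V_\mu(P\mathbf{v})\le\rho_2(P)V_\mu(\mathbf{v})$, and $V_{\mu'}\le(\max_j\mu'_j/\mu_j)^{1/2}V_\mu$. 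Hence one cycle gives
\[
V_{\mu_A}(BA^k\mathbf{v})\le\rho_2(B)\,\rho_2(A)^k\Bigl(\max_i\tfrac{d^{(1)}_i}{d^{(2)}_i}\cdot\max_i\tfrac{d^{(2)}_i}{d^{(1)}_i}\Bigr)^{1/2}V_{\mu_A}(\mathbf{v})\le\rho_*\,V_{\mu_A}(\mathbf{v}).
\]
The last inequality holds because the degree-ratio product equals $\max_j\frac{\mu_A(j)}{\mu_B(j)}\cdot\max_j\frac{\mu_B(j)}{\mu_A(j)}\ge 1$.

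This contraction iterates to $\rho_*^q$ with no Jordan-block loss. To return to the max norm, note that the consensus value always lies between the smallest and largest current opinions. Therefore $\|\mathbf{e}(q(k+1))\|_{\max}\le 2V_{\mu_A}(\mathbf{x}(q(k+1)))/\sqrt{\min_j\mu_A(j)}$, while $V_{\mu_A}(\mathbf{x}(0))\le\|\mathbf{e}(0)\|_{\mu_A}\lesssim\|\mathbf{e}(0)\|_\pi$.

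Compared with the paper's two-line argument, this costs a paragraph but buys three things:
\begin{itemize}
\item it is actually rigorous;
\item it yields the sharper square-rooted degree factor;
\item it makes your Step 1 citation unnecessary, since the SLEM bound follows from the spectral radius on the quotient being at most this norm.
\end{itemize}

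Commit to this route and drop the Gelfand fallback. The fallback only yields the rate $(\rho_2(BA^k)+\epsilon)^q$, which does not give the stated $\rho_*^q$ when $\rho_2(BA^k)=\rho_*$.
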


\noindent\textit{Intuitive interpretation.} This result quantifies how quickly consensus is reached when agents switch between two platforms. Each full cycle of $k$ steps on $G_1$ followed by one step on $G_2$ shrinks disagreement by about $\rho_*$. The terms $\rho_2(A)^k$ and $\rho_2(B)$ capture how strongly each layer mixes opinions during its turn, while the degree-ratio factors measure structural alignment: 
if their influence patterns differ, the ratios inflate and slow down mixing. 

\subsection{Stability of the Opinion Dynamics}
\label{sec:stability_switched}

Lastly, we analyze two forms of stability for the switching layer opinion dynamics. The first examines the limit as $k \to \infty$, where the dynamics are dominated by layer $G_1$. 

\begin{proposition}\label{prop:pi_convergence}
Let $\boldsymbol{\pi}^k$ be the stationary distribution of $BA^k$ and let $x^{s,k}(\infty)$ denote the consensus value in the switching layer with period $k+1$.  Then, as $k \to \infty$,
\begin{equation*}
\bigl|x^{s,k}(\infty)-x^1(\infty)\bigr| = O\!\bigl(\rho_2(A)^k\bigr).
\end{equation*}
\end{proposition}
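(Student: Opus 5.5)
The plan is to treat $BA^k$ as a small perturbation of the rank-one matrix $\mathbf{1}\boldsymbol{\pi}_A^T$ and then read off the change in stationary distribution from Corollary~\ref{cor:stability_bound}. Here $\boldsymbol{\pi}_A$ is the stationary distribution of $A$, which is primitive. The consensus values are $x^{s,k}(\infty)=\boldsymbol{\pi}^k\mathbf{x}(0)$ and $x^1(\infty)=\boldsymbol{\pi}_A\mathbf{x}(0)$, with $\mathbf{x}(0)\in[0,1]^N$. It therefore suffices to prove
\[
\|\boldsymbol{\pi}^k-\boldsymbol{\pi}_A\|_{\max}=O\bigl(\rho_2(A)^k\bigr),
\]
since then $|x^{s,k}(\infty)-x^1(\infty)|\le N\|\boldsymbol{\pi}^k-\boldsymbol{\pi}_A\|_{\max}$.

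\textbf{Step 1: $A^k$ is close to $\mathbf{1}\boldsymbol{\pi}_A^T$.} Apply Lemma~\ref{lem:convergence_rate} column by column, taking $\mathbf{x}(0)=\mathbf{e}_j$. Then $A^k\mathbf{e}_j-\pi_A(j)\mathbf{1}$ has $\pi$-norm at most $\rho_2(A)^k\|\mathbf{e}_j-\pi_A(j)\mathbf{1}\|_\pi$. Since $\min_i\pi_A(i)>0$, this gives
\[
\|A^k-\mathbf{1}\boldsymbol{\pi}_A^T\|_{\max}\lesssim\rho_2(A)^k,
\]
which is the corollary following Lemma~\ref{lem:convergence_rate}. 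Because $B$ is row-stochastic, $B\mathbf{1}\boldsymbol{\pi}_A^T=\mathbf{1}\boldsymbol{\pi}_A^T$. Hence
\[
BA^k-\mathbf{1}\boldsymbol{\pi}_A^T=B\bigl(A^k-\mathbf{1}\boldsymbol{\pi}_A^T\bigr),
\]
and since each row of $B$ is a probability vector, its max norm is also $\lesssim\rho_2(A)^k$.

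\textbf{Step 2: perturbation of the stationary distribution.} Apply Theorem~\ref{thm:schweitzer} with $P=\mathbf{1}\boldsymbol{\pi}_A^T$ and $\tilde P=BA^k$. The stationary distribution of $P$ is $\boldsymbol{\pi}_A$, and it is unique because $P$ is rank one with strictly positive rows. Also, $BA^k$ has a unique stationary distribution $\boldsymbol{\pi}^k$ for all large $k$: by Step 1 its entries are eventually strictly positive, so it is primitive, and Proposition~\ref{prop:switching_consensus} applies. The fundamental matrix is especially simple here:
\[
Z=\bigl(I-\mathbf{1}\boldsymbol{\pi}_A^T+\mathbf{1}\boldsymbol{\pi}_A^T\bigr)^{-1}=I.
\]
This gives the exact identity
\[
\boldsymbol{\pi}^k-\boldsymbol{\pi}_A=\boldsymbol{\pi}_A\bigl(BA^k-\mathbf{1}\boldsymbol{\pi}_A^T\bigr).
\]
Its max norm is at most $N\|BA^k-\mathbf{1}\boldsymbol{\pi}_A^T\|_{\max}\lesssim\rho_2(A)^k$, which closes the argument.

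\textbf{Main obstacle.} The hard step is controlling the constants. The implied constant in Corollary~\ref{cor:stability_bound} generally depends on $Z$ for the unperturbed chain. If one instead took $P=A$, it would also be unclear how $\tilde P$ should depend on $k$. Choosing the rank-one limit as the base point makes $Z=I$ and removes this difficulty. The only remaining graph-dependent constant is $\max_j\|\mathbf{e}_j-\pi_A(j)\mathbf{1}\|_\pi/\sqrt{\min_i\pi_A(i)}$, which is independent of $k$. The other point to check is existence and uniqueness of $\boldsymbol{\pi}^k$ for every sufficiently large $k$, which Step 1 supplies.
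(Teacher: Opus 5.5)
Your overall strategy works, but the ``exact identity'' in Step~2 is false as written. The equation $\boldsymbol{\pi}^k-\boldsymbol{\pi}_A=\boldsymbol{\pi}_A(BA^k-\mathbf{1}\boldsymbol{\pi}_A^T)$ amounts to $\boldsymbol{\pi}^k=\boldsymbol{\pi}_A BA^k$, which fails in general. Take Example~\ref{ex:stationary_distribution} with $k=1$. There $BA$ is entrywise positive, so every hypothesis holds, and one computes $\boldsymbol{\pi}_A BA=(\tfrac{11}{36},\tfrac{11}{36},\tfrac{7}{18})$, whereas $\boldsymbol{\pi}_{BA}=(\tfrac{3}{10},\tfrac{3}{10},\tfrac{2}{5})$.

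You inherited the error from the form of Theorem~\ref{thm:schweitzer} you used. Schweitzer's exact identity has the \emph{perturbed} distribution on the right, $\tilde{\boldsymbol{\pi}}-\boldsymbol{\pi}=\tilde{\boldsymbol{\pi}}EZ$ (equivalently $\tilde{\boldsymbol{\pi}}=\boldsymbol{\pi}(I-EZ)^{-1}$). Replacing $\tilde{\boldsymbol{\pi}}$ by $\boldsymbol{\pi}$ is only a first-order approximation. Corollary~\ref{cor:stability_bound} itself is unaffected, since $\tilde{\boldsymbol{\pi}}$ is also a probability vector.

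The repair takes one line and makes the perturbation theorem unnecessary. From $\boldsymbol{\pi}^kBA^k=\boldsymbol{\pi}^k$ and $\boldsymbol{\pi}^k\mathbf{1}=1$,
\[
\boldsymbol{\pi}^k-\boldsymbol{\pi}_A=\boldsymbol{\pi}^k\bigl(BA^k-\mathbf{1}\boldsymbol{\pi}_A^T\bigr)=\bigl(\boldsymbol{\pi}^kB\bigr)\bigl(A^k-\mathbf{1}\boldsymbol{\pi}_A^T\bigr).
\]
Since $\boldsymbol{\pi}^kB$ is a probability vector, $\|\boldsymbol{\pi}^k-\boldsymbol{\pi}_A\|_{\max}\le\|A^k-\mathbf{1}\boldsymbol{\pi}_A^T\|_{\max}\lesssim\rho_2(A)^k$ by your Step~1. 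With this correction your proof is complete.

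Once fixed, your route differs from the paper's in the choice of base chain. The paper compares $BA^k$ with $A^{k+1}$, which also has stationary distribution $\boldsymbol{\pi}_A$. It writes $A^k=\mathbf{1}\boldsymbol{\pi}_A^T+E_k$ with a cited mixing-time bound on $E_k$, gets $\|BA^k-A^{k+1}\|_{\max}=\|(B-A)E_k\|_{\max}\lesssim\rho_2(A)^k$, and then invokes Corollary~\ref{cor:stability_bound} with $P=A^{k+1}$. The hidden constant there depends on $Z_k=(I-A^{k+1}+\mathbf{1}\boldsymbol{\pi}_A^T)^{-1}$, which varies with $k$, and the paper leaves its uniformity unaddressed. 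It is in fact uniformly bounded: $A$ is reversible, so $Z_k^{-1}$ is diagonalizable in a $k$-independent eigenbasis with eigenvalues $1$ and $1-\lambda_i(A)^{k+1}\ge 1-\rho_2(A)$.

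Your base point $\mathbf{1}\boldsymbol{\pi}_A^T$ gives $Z=I$, so this question never arises. Your Step~1 also obtains the mixing bound from Lemma~\ref{lem:convergence_rate} with an explicit $k$-independent constant instead of citing it. The result is a shorter, self-contained proof of the same $O(\rho_2(A)^k)$ rate. The paper's base chain $A^{k+1}$ has a different advantage: the same comparison also handles the regime of fixed $k$ with $B\to A$ (Proposition~\ref{prop:perturbation_stability}). Your rank-one base point cannot do that, because $A^k$ is not close to $\mathbf{1}\boldsymbol{\pi}_A^T$ when $k$ is fixed.
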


The second evaluates robustness when there are minor variations in network structure between layers. 

\begin{proposition}
\label{prop:perturbation_stability}
If $\|A - B\|_{\max}$ is sufficiently small, then 
\begin{equation*}
\bigl|x^s(\infty) - x^1(\infty)\bigr| = O\bigl(\|A - B\|_{\max}\bigr).
\end{equation*}
\end{proposition}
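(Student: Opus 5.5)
The plan is to treat the switching dynamics as a perturbation of the single-layer dynamics on $G_1$ and to apply Corollary~\ref{cor:stability_bound} to the one-cycle operator $P := BA^k$, compared against $\tilde P := A^{k+1}$.

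First, the target quantity can be read off from $P$. By Proposition~\ref{prop:switching_consensus}, the switching consensus is $x^s(\infty) = \boldsymbol{\pi}_{BA^k}\,\mathbf{x}(0)$. The single-layer consensus is $x^1(\infty) = \boldsymbol{\pi}_A\,\mathbf{x}(0)$. Since $A$ is primitive, $A^{k+1}$ is primitive, and its stationary distribution is exactly $\boldsymbol{\pi}_A$. Hence
\[
\bigl|x^s(\infty) - x^1(\infty)\bigr| = \bigl|(\boldsymbol{\pi}_{BA^k} - \boldsymbol{\pi}_{A^{k+1}})\,\mathbf{x}(0)\bigr| \le N\,\|\boldsymbol{\pi}_{BA^k} - \boldsymbol{\pi}_{A^{k+1}}\|_{\max},
\]
using $x_i(0)\in[0,1]$. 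The problem is therefore reduced to comparing the stationary distributions of two stochastic matrices.

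Second, I bound the perturbation matrix:
\[
E := BA^k - A^{k+1} = (B-A)A^k .
\]
Because $A^k$ is row-stochastic with nonnegative entries, each entry of $E$ is a convex combination of entries of row $i$ of $B-A$. This gives $\|E\|_{\max} \le \|B-A\|_{\max}$, uniformly in $k$.

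Third, I apply Theorem~\ref{thm:schweitzer} with base matrix $P = A^{k+1}$ and perturbed matrix $\tilde P = BA^k$:
\[
\boldsymbol{\pi}_{BA^k} - \boldsymbol{\pi}_A = \boldsymbol{\pi}_A E Z, \qquad Z = \bigl(I - A^{k+1} + \mathbf{1}\boldsymbol{\pi}_A^T\bigr)^{-1}.
\]
This formula requires the perturbed chain to have a unique stationary distribution. If $\|A-B\|_{\max}$ is small enough, then $BA^k$ lies entrywise close to the primitive matrix $A^{k+1}$. It therefore has positive entries wherever a suitable power of $A^{k+1}$ does, so the uniqueness hypothesis holds. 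This is where ``sufficiently small'' enters, consistent with the assumption of Proposition~\ref{prop:switching_consensus}.

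The main obstacle is controlling $\|Z\|$, which is the hidden constant in Corollary~\ref{cor:stability_bound}. For fixed $k$ it is a fixed finite matrix, and the result follows immediately. To make the constant independent of $k$, I would use the fundamental-matrix expansion
\[
Z = \mathbf{1}\boldsymbol{\pi}_A^T + \sum_{t\ge0}\bigl(A^{(k+1)t} - \mathbf{1}\boldsymbol{\pi}_A^T\bigr).
\]
By Lemma~\ref{lem:convergence_rate} and the corollary after it, each term is at most of order $\rho_2(A)^{(k+1)t}$. Summing gives $\|Z\| \lesssim 1/(1-\rho_2(A)^{k+1}) \le 1/(1-\rho_2(A))$, which is bounded uniformly in $k$ because $\rho_2(A) < 1$.

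Combining the three steps gives $\|\boldsymbol{\pi}_{BA^k} - \boldsymbol{\pi}_A\|_{\max} \lesssim \|A-B\|_{\max}$, and hence $|x^s(\infty) - x^1(\infty)| = O(\|A-B\|_{\max})$.
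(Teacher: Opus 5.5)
Your route is the paper's: compare the cycle operator $BA^k$ with $A^{k+1}$ (whose stationary distribution is $\boldsymbol{\pi}_A$), write the perturbation as $(B-A)A^k$, and invoke Corollary~\ref{cor:stability_bound}. Two of your additions go beyond the paper and are fine in substance. The first is the continuity argument that a power of $BA^k$ stays positive, so $BA^k$ stays primitive. The second is the series bound on $Z$, which makes the constant uniform in $k$.

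One step, however, is wrong as justified. In $E_{ij}=\sum_{l}(B-A)_{il}(A^k)_{lj}$ the weights $(A^k)_{lj}$ run down \emph{column} $j$ of $A^k$. The columns of a row-stochastic matrix need not sum to $1$. So $E_{ij}$ is not a convex combination of entries of row $i$ of $B-A$, and $\|E\|_{\max}\le\|B-A\|_{\max}$ does not follow from row-stochasticity. For example, the row $(\epsilon,\epsilon,-\epsilon,-\epsilon)$ against the column $(1,1,0,0)^T$ gives $2\epsilon$.

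There are two ways to fix this. You can use the paper's bound $\|E\|_{\max}\le N\|B-A\|_{\max}\|A^k\|_{\max}\le N\|B-A\|_{\max}$. Alternatively, work in the max-row-sum norm, where $\|(B-A)A^k\|_\infty\le\|B-A\|_\infty\le N\|B-A\|_{\max}$ is legitimate and pairs naturally with $\|\tilde{\boldsymbol{\pi}}-\boldsymbol{\pi}\|_1\le\|E\|_\infty\|Z\|_\infty$. Since $N$ is fixed, the $O(\|A-B\|_{\max})$ conclusion and its uniformity in $k$ survive.

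One small point is inherited from the paper's statement of Theorem~\ref{thm:schweitzer}. The exact identity is $\tilde{\boldsymbol{\pi}}-\boldsymbol{\pi}=\tilde{\boldsymbol{\pi}}EZ$, with the perturbed distribution on the right. Your bound is unaffected because both are probability vectors.
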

\section{Numerical Experiments}
\label{sec:numerical_results}

We validate our theoretical findings through experiments on a real-world multiplex high-school contact data from \cite{mastrandrea2015contact}. 
Layer $A$ encodes Facebook friendships ($w_{ij}=1$ if students $i$ and $j$ are friends, 0 otherwise). Layer $B$ captures face-to-face contact durations via $w_{ij} \in \{1,2,3,4\}$, corresponding to contact intervals of $\leq 5$ min, 5–15 min, 15 min–1h, and $>1$h, respectively. Initial opinions are drawn uniformly from $[0,1]$. We select 72 students so that layer $B$ alone does not reach consensus. 
Figures~\ref{fig:merged_layer_consensus} and~\ref{fig:switching_layer_consensus} confirm our theoretical finding that cross-layer interactions induce a stable consensus even when individual layers lack this property (as seen in $\alpha=0$ or $k=0$).\looseness-1

\begin{figure*}[t]
\centering
\subfloat[\(\alpha=0\)]{%
  \includegraphics[width=.3\textwidth]{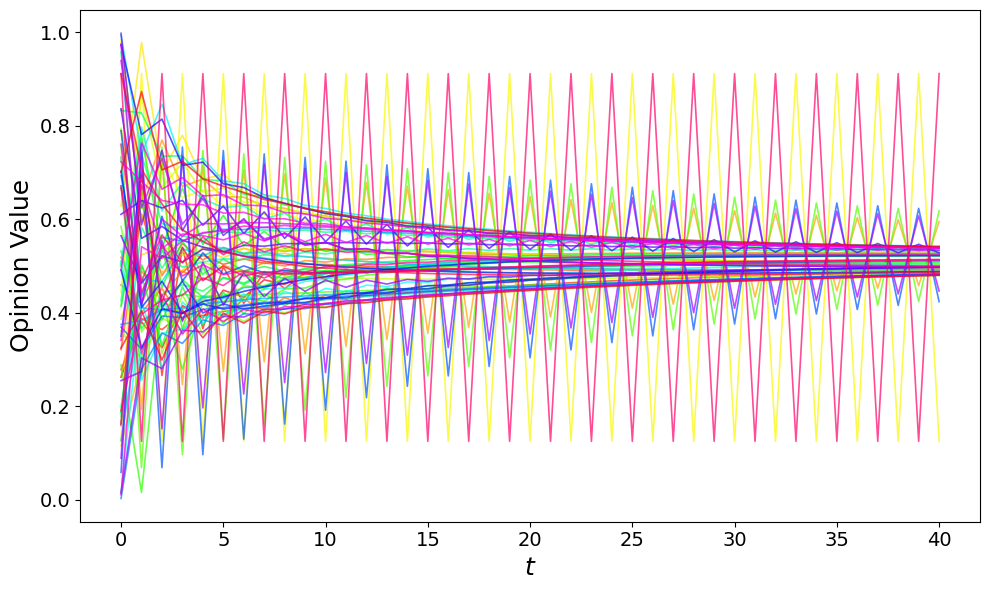}%
  \label{fig:real_merged_a}
}\hfill
\subfloat[\(\alpha=0.5\)]{%
  \includegraphics[width=.3\textwidth]{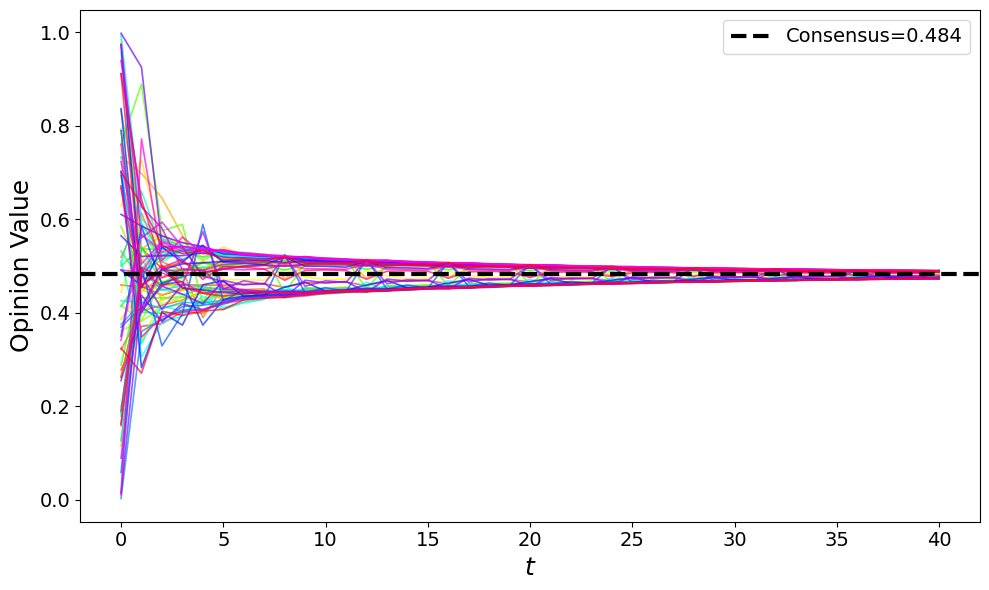}%
  \label{fig:real_merged_b}
}\hfill
\subfloat[\(\alpha=1\)]{%
  \includegraphics[width=.3\textwidth]{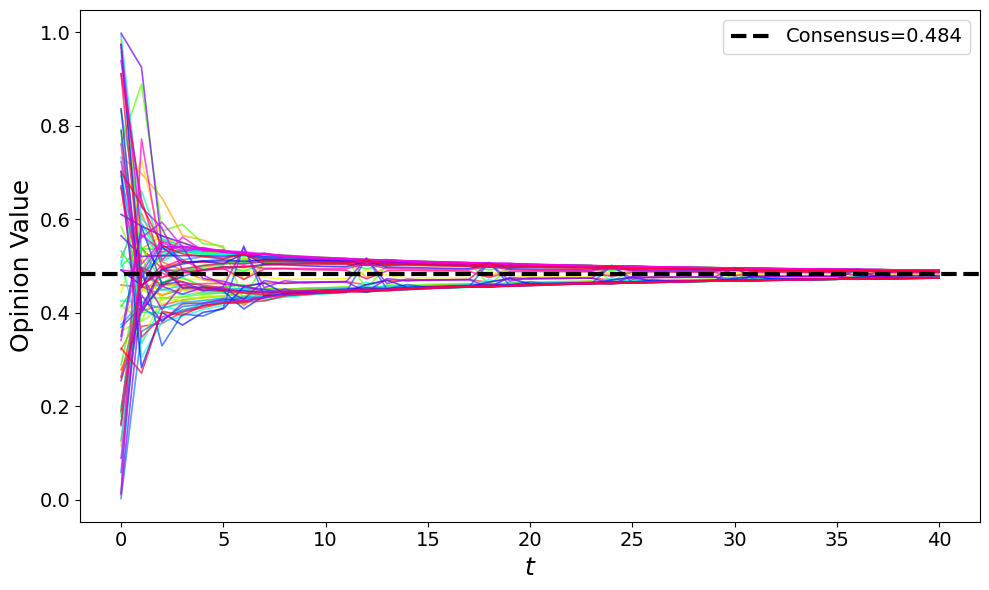}%
  \label{fig:real_merged_c}
}
\caption{Merged layers opinion dynamics on the high-school contact network \cite{mastrandrea2015contact} for different weighting factors $\alpha$.}
\label{fig:merged_layer_consensus}
\end{figure*}
\vspace{-0.1in}
\begin{figure*}[t]
\centering
\subfloat[\(k=0\)]{%
  \includegraphics[width=.3\textwidth]{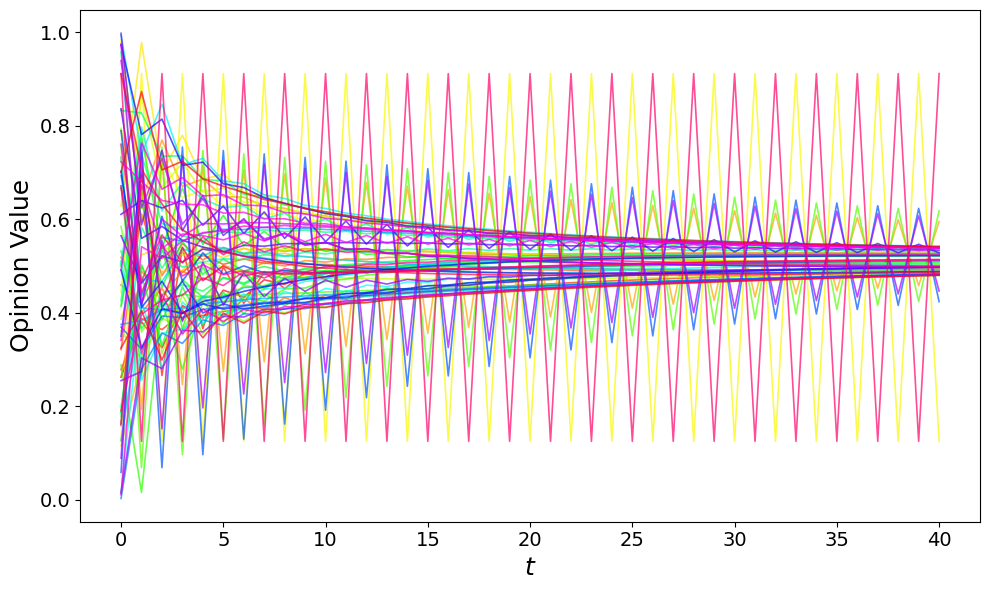}%
  \label{fig:real_switching_a}
}\hfill
\subfloat[\(k=3\)]{%
  \includegraphics[width=.3\textwidth]{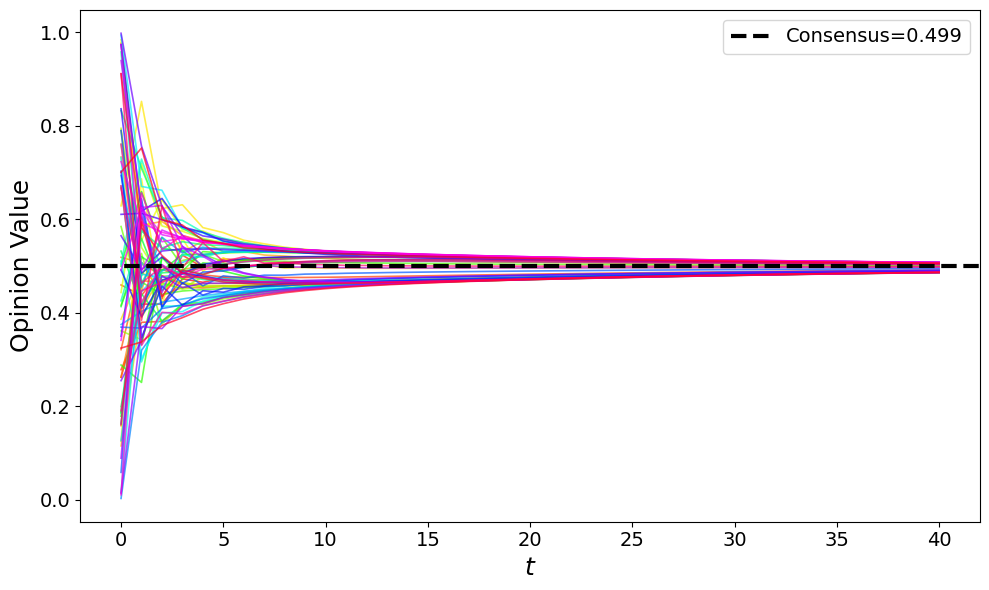}%
  \label{fig:real_switching_b}
}\hfill
\subfloat[\(k=5\)]{%
  \includegraphics[width=.3\textwidth]{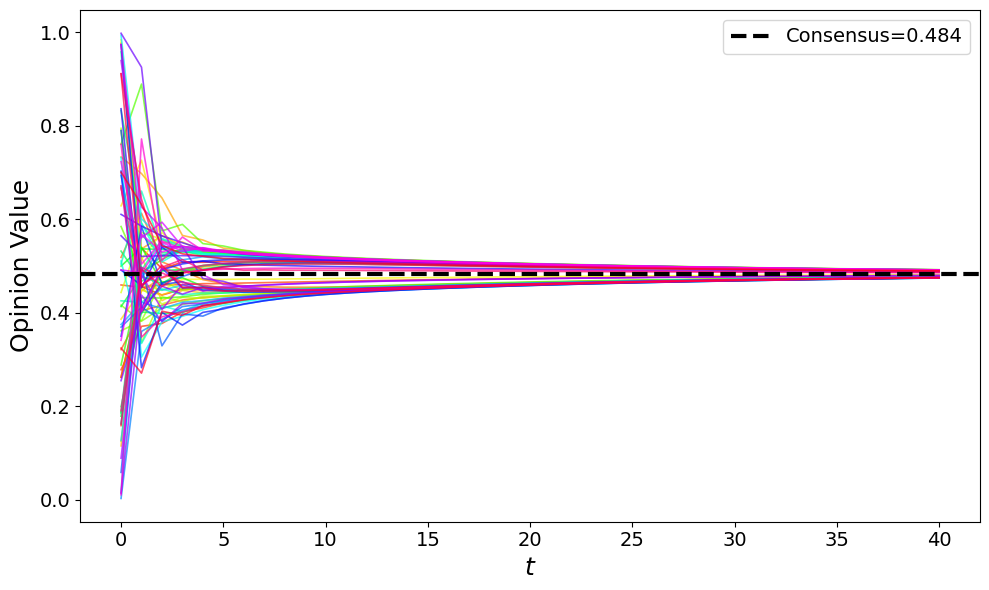}%
  \label{fig:real_switching_c}
}
\caption{Switching layers opinion dynamics on the high-school contact network \cite{mastrandrea2015contact} with different switching period $k+1$.}
\vspace{-0.1in}
\label{fig:switching_layer_consensus}
\end{figure*}
\section{Conclusion}\label{sec:conclusion}

We proposed two complementary models for multilayer opinion dynamics: merged layers (to capture simultaneous exposure) and switching layers (to capture shifting attention), and analyzed consensus, convergence, and stability of opinion dynamics for each of these models.
Our key finding is that multilayer interactions can create (resp. undermine) consensus even when individual layers fail to converge (resp. support convergence in isolation). 
Notably, we show that the alignments in the weighted degrees of the nodes, a notion of similarity between the layers, is a main determinant of if merging or switching can speed up convergence to consensus compared to when the layers operate in isolation. These results can provide design and intervention guidelines to help influence opinion dynamics on interacting social platforms.

Future research directions include extending our framework to adaptive layer interactions where weights or switching patterns vary over time or across social contexts, and incorporating heterogeneous agent behaviors, including stubborn agents, agents with varying susceptibilities to different layer influences, or those adopting different switching patterns.

\bibliographystyle{IEEEtran}
\bibliography{ref} 

\clearpage

\section{Appendix}\label{sec:appendix}
\subsection{Proof of Proposition \ref{prop:merge_primitive}}

\begin{proof}
Without loss of generality, assume that $A$ is primitive. Then there exists a positive integer $n$ such that $A^n > 0$, 
where $M>0$ denotes that all entries of the matrix $M$ are strictly positive. By the definition of $A$ and $B$ as transition matrices, it follows that $A^m \ge 0$ and $B^m \ge 0$ for all $m \ge 1$, and the same holds for all finite products of these matrices.

Recall that the entries of the merged transition matrix $C$ are given by 
\[C_{ij} = \dfrac{\alpha w_{ij}^{(1)} + (1-\alpha) w_{ij}^{(2)}}
{\sum_{k \in \partial_i^{(m)}} \alpha w_{ik}^{(1)} + (1-\alpha) w_{ik}^{(2)}}~,\]
when \(C_{ij} \neq 0.\)
Using the fact that all weights are nonnegative, we obtain the lower bound
\begin{align*}
    C_{ij}
&\ge
\dfrac{\alpha w_{ij}^{(1)}}{2(|E_1|+|E_2|)}
=
\dfrac{\alpha \sum_{k \in \partial_i^{(1)}} w_{ik}^{(1)}}{2(|E_1|+|E_2|)} A_{ij} \\
&\ge
\dfrac{\alpha \min_{1\le i \le N} d_i^{(1)}}{2(|E_1|+|E_2|)} A_{ij}~,
\end{align*}
where the inequalities follow primarily from the definitions $d^{(k)}_{i} = \sum_{j=1}^N w_{ij}^{(k)}$ and $|E_k| = \frac{1}{2} \sum_{i=1}^N d^{(k)}_{i}$. 

When \(C_{ij}=0\), we also have \(A_{ij}=0\). Consequently,
\[ C - \frac{\alpha \min_{1\le i \le N} d_i^{(1)}}{2(|E_1|+|E_2|)} A \ge 0 \]
entrywise. We can therefore write
\begin{align*}
C^n
&=
\left(
\frac{\alpha \min_{1\le i \le N} d_i^{(1)}}{2(|E_1|+|E_2|)} A
+
\left(
C - \frac{\alpha \min_{1\le i \le N} d_i^{(1)}}{2(|E_1|+|E_2|)} A
\right)
\right)^n \\
&\ge
\left(
\frac{\alpha \min_{1\le i \le N} d_i^{(1)}}{2(|E_1|+|E_2|)}
\right)^n
A^n
>
0,
\end{align*}
where the inequality follows from the nonnegativity of all terms and the strict positivity of $A^n$. Hence, $C^n>0$, which implies that $C$ is primitive. This establishes the existence and uniqueness of consensus in the merged network.
\end{proof}

\subsection{Proof of Proposition \ref{prop:consensus_bound}}

\begin{proof}
For each layer $G_\ell$ ($\ell=1,2$), the consensus is
\[
x^\ell(\infty)
=\frac{1}{2|E_\ell|}\sum_{i=1}^N d_{ i}^{(\ell)}x_i(0).
\]
In the merged network $G_m$, we obtain
\begin{align*}
x^m(\infty)
&=\frac{\sum_{i=1}^N \bigr(\alpha d_{ i}^{(1)} +(1-\alpha)d_{ i}^{(2)}\bigl) x_i(0)}{2\bigl(\alpha|E_1|+(1-\alpha)|E_2|\bigr)}
\nonumber \\
&=\frac{\alpha|E_1|\,x^1(\infty)+(1-\alpha)|E_2|\,x^2(\infty)}
{\alpha|E_1|+(1-\alpha)|E_2|}.
\end{align*}

Since this is a convex combination of $x^1(\infty)$ and $x^2(\infty)$, we immediately get
\[
\min\{x^1(\infty),x^2(\infty)\}
\;\leq\;
x^m(\infty)
\;\leq\;
\max\{x^1(\infty),x^2(\infty)\}.
\]
\end{proof}

\subsection{Proof of Proposition \ref{prop:slem_bound}}

\begin{proof}
For the lower bound, since $C$ has zero diagonal entries, applying Vieta's formulas to the characteristic polynomial yields
\begin{equation*}
\operatorname{tr}(C) = \sum_{i=1}^N \lambda_i(C) = 0.
\end{equation*}
Given that $\lambda_1(C) = 1$ is the Perron eigenvalue, we have
\begin{equation*}
\sum_{i=2}^N \lambda_i(C) = -1.
\end{equation*}
By the pigeonhole principle, there exists some $j \geq 2$ such that
\begin{equation*}
\lambda_j(C) \leq -\frac{1}{N-1}.
\end{equation*}
Therefore,
\begin{equation*}
\rho_2(C) = \max_{i \geq 2} |\lambda_i(C)| \geq \frac{1}{N-1}.
\end{equation*}

We now turn to the upper bound.
Assume that $d^{(1)}_{i} = d^{(2)}_{i}$ for all $1 \le i \le N$, that is,
\begin{equation*}
\sum_{k \in \partial_i^{(1)}} w_{ik}^{(1)} 
= 
\sum_{k \in \partial_i^{(2)}} w_{ik}^{(2)}
\qquad \text{for all } i.
\end{equation*}

For any $i,j$, we have
\begin{align*}
\alpha A_{ij} &+ (1-\alpha) B_{ij} \\
&=
\alpha\left(\frac{w_{ij}^{(1)}}{\sum_{k \in \partial_i^{(1)}} w_{ik}^{(1)}}\right)
 + (1-\alpha)\left(\frac{w_{ij}^{(2)}}{\sum_{k \in \partial_i^{(2)}} w_{ik}^{(2)}}\right) \\
&= \frac{\alpha w_{ij}^{(1)} + (1-\alpha) w_{ij}^{(2)}}{\sum_{k \in \partial_i^{(1)}} w_{ik}^{(1)}} \\
&= \frac{\alpha w_{ij}^{(1)} + (1-\alpha) w_{ij}^{(2)}}{\sum_{k \in \partial_i^{(m)}} 
\bigl(\alpha w_{ik}^{(1)} + (1-\alpha) w_{ik}^{(2)}\bigr)} = C_{ij}~.
\end{align*}
Hence, under the degree-matching assumption $C = \alpha A + (1-\alpha) B$. 

Now, let $D$ be the diagonal matrix with entries $d^{(1)}_{i}$.  
We may therefore write
\begin{equation*}
A = D^{-1}W_A, \quad B = D^{-1}W_B, \quad C = D^{-1}W_C,
\end{equation*}
where $W_C = \alpha W_A + (1-\alpha)W_B$ with $0 \leq \alpha \leq 1$.

Define the symmetrized matrices
\begin{equation*}
S_X = D^{-\frac{1}{2}} W_X D^{-\frac{1}{2}}, \quad X \in \{A, B, C\}.
\end{equation*}
Each matrix $S_X$ is real symmetric and similar to $X$, hence they share the same eigenvalues. Since $A$, $B$, and $C$ are stochastic matrices, each $S_X$ has largest eigenvalue $\lambda_1 = 1$ with corresponding eigenvector $D^{1/2}\mathbf{1}_N$. Moreover,
\[
S_C = \alpha S_A + (1-\alpha) S_B.
\]

For any unit vector $v \perp D^{\frac{1}{2}}\mathbf{1}_N$, define the Rayleigh quotient (~\cite{horn2012matrix}, Theorem 4.2.2)
\begin{equation*}
r_X(v) = v^T S_X v.
\end{equation*}
By the Courant--Fischer theorem (~\cite{horn2012matrix}, Theorem 4.2.6),
\begin{equation*}
\lambda_2(C) = \max_{\substack{\|v\| = 1 \\ v \perp D^{\frac{1}{2}}\mathbf{1}_N}} r_C(v).
\end{equation*}
Let $v^*$ be the vector achieving this maximum. Then
\begin{align}
\label{1234}
\lambda_2(C) = r_C(v^*) &= (v^*)^T S_C v^* \nonumber \\
&= \alpha (v^*)^T S_A v^* + (1-\alpha)(v^*)^T S_B v^* \nonumber \\
&= \alpha r_A(v^*) + (1-\alpha) r_B(v^*) \nonumber \\
&\leq \max\{r_A(v^*), r_B(v^*)\} \nonumber \\
&\leq \max\{\lambda_2(A), \lambda_2(B)\}.
\end{align}
Similarly, we can show that
\begin{equation}
\label{12345}
\min\{\lambda_N(A), \lambda_N(B)\} \leq \lambda_N(C).
\end{equation}
Combining \eqref{1234} and \eqref{12345} establishes the upper bound on $\rho_2(C)$ in the proposition statement. 
\end{proof}

\subsection{Proof of Proposition \ref{prop:alpha_stability}}

\begin{proof}
As shown in the proof of Proposition~\ref{prop:consensus_bound},
\[
x^m(\infty)-x^1(\infty)
=\frac{(1-\alpha)\,|E_2|\,\bigl(x^2(\infty)-x^1(\infty)\bigr)}
{\alpha\,|E_1|+(1-\alpha)\,|E_2|}\,.
\]
Since 
$\alpha\,|E_1|+(1-\alpha)\,|E_2|\geq\min\{|E_1|,|E_2|\}$, we obtain
\begin{align*}
    \bigl|x^m(\infty)-x^1(\infty)\bigr|
&\leq\frac{(1-\alpha)\,|E_2|}{\min\{|E_1|,|E_2|\}}
\;\bigl|x^2(\infty)-x^1(\infty)\bigr|
\\&= C\,(1-\alpha),
\end{align*}
where 
\[
C=\frac{|E_2|}{\min\{|E_1|,|E_2|\}}\,
\bigl|x^2(\infty)-x^1(\infty)\bigr|.
\]
\end{proof}

\subsection{Proof of Proposition \ref{prop:switching_convergence}}

\begin{proof}
Since graphs $G_1$ and $G_2$ are undirected, the matrices $A$ and $B$ are reversible. Given that $A$ is primitive, so is $A^k$, and both share the same stationary distribution $\boldsymbol{\pi}$, satisfying $\boldsymbol{\pi} A^k = \boldsymbol{\pi}$. Let $\mathbf{v}_2$ be the eigenvector of $A$ corresponding to $\rho_2(A)$. Then,
\begin{equation*}
A^k \mathbf{v}_2 = \rho_2(A)^k \mathbf{v}_2,
\end{equation*}
implying $\rho_2(A^k) = \rho_2(A)^k$. By Proposition~\ref{prop:product_slem}, for reversible matrices $B$ and $A^k$, and using the relationship between stationary distributions and degree sequences from \eqref{eq:stationary_dist}, the SLEM of the product satisfies
\begin{equation*}
\rho_2(BA^k) \leq \rho_* = \rho_2(B) \cdot \rho_2(A)^k \cdot \max_i \frac{d^{(1)}_{i}}{d^{(2)}_{i}} \cdot \max_i \frac{d^{(2)}_{i}}{d^{(1)}_{i}}.
\end{equation*}
For $t = n(k+1)$, the error after $n$ cycles is
\begin{align*}
\|\mathbf{e}(n(k+1))\|_{\max} &= \| (BA^k)^n \mathbf{e}(0) \|_{\max}  \notag \\&\lesssim \rho_2(BA^k)^n \|\mathbf{e}(0)\|_\pi \notag \\
&\lesssim  \rho_*^n \|\mathbf{e}(0)\|_\pi.
\end{align*}
For general $t = n(k+1) + r$, where $0 \leq r \leq k$, similar to the proof in Proposition \ref{prop:switching_consensus}, the additional application of $A^r$ does not increase the error norm. Thus, the bound holds with $\left\lfloor \frac{t}{k+1} \right\rfloor$, completing the proof.
\end{proof}

\subsection{Proof of Proposition \ref{prop:switching_consensus}}

\begin{proof}
Since the matrix $BA^k$ is primitive, it admits a unique stationary distribution $\boldsymbol{\pi}$. Standard mixing-time analysis (see, e.g.,~\cite{bremaud2013markov, aldous-fill-2014, levin2017markov}) provides constants $C > 0$ such that
\begin{equation}
\|(BA^k)^n - \mathbf{1}\boldsymbol{\pi}^T\|_{\max} \leq C\rho_2(BA^k)^n. \label{eq:mixing_time}
\end{equation}
Since $A$ is stochastic, we have $\|A^r\|_\infty = 1$ for all integers $r \geq 1$. Therefore, for each integer $1 \leq r \leq k$, it follows that
\begin{align}
\|A^r(BA^k)^t - \mathbf{1}\boldsymbol{\pi}^T\|_{\max}
&= \|A^r(BA^k)^t - A^r\mathbf{1}\boldsymbol{\pi}^T\|_{\max} \notag \\
&= \|A^r\bigr((BA^k)^t - \mathbf{1}\boldsymbol{\pi}^T \bigr)\|_{\max}  \notag \\
&\leq \|(BA^k)^t - \mathbf{1}\boldsymbol{\pi}^T\|_{\max}.
\label{eq:stochastic_bound}
\end{align}
When $t = n(k+1)$, the product simplifies as
\[
\prod_{s=1}^{n(k+1)} M(s) = (BA^k)^n,
\]
and by applying~\eqref{eq:mixing_time}, we obtain
\begin{equation*}
\|(BA^k)^n - \mathbf{1}\boldsymbol{\pi}^T\|_{\max} \leq C\rho_2(BA^k)^n. 
\end{equation*}
For general times $t = n(k+1) + r$ with $0 \leq r \leq k$, we can decompose
\[
\prod_{s=1}^{t} M(s) = A^r(BA^k)^n.
\]
Using~\eqref{eq:stochastic_bound}, we get
\[
\|A^r(BA^k)^n - \mathbf{1}\boldsymbol{\pi}^T\|_{\max} \leq \|(BA^k)^n - \mathbf{1}\boldsymbol{\pi}^T\|_{\max} \leq C\rho_2(BA^k)^n.
\]
By Perron-Frobenius theorem, $\rho_2(BA^k) < 1$, hence the error bound decreases exponentially with $n$, ensuring
\begin{equation*}
\lim_{t\to\infty}\prod_{s=1}^{t}M(s) = \mathbf{1}\boldsymbol{\pi}^T. 
\end{equation*}
This limit implies that consensus is reached with the stationary distribution $\boldsymbol{\pi}$.
\end{proof}

\subsection{Proof of Proposition \ref{prop:pi_convergence}}
\begin{proof}
Standard mixing-time results ~\cite{bremaud2013markov, aldous-fill-2014, levin2017markov} yield
\begin{equation*}
\|A^k-\mathbf{1}\boldsymbol{\pi}^{T}\|_{\max} \le C\rho_2(A)^k
\end{equation*}
for some constant $C>0$.

Express $A^k$ as
\[
A^k=\mathbf{1}\boldsymbol{\pi}^{T}+ E_k,
\]
where $\|E_k\|_{\max} \leq C\rho_2(A)^k$.

To analyze the perturbation $BA^k - A^{k+1}$, we compute
\begin{align*}
    \|BA^k-A^{k+1}\|_{\max} &= \|(B-A)A^k\|_{\max} \nonumber \\
    &= \|(B-A)(\mathbf{1}\boldsymbol{\pi}^{T}+ E_k)\|_{\max}\nonumber \\
    &= \|(B-A)E_k\|_{\max}  \nonumber\\
    &\leq  \|E_k\|_{\max}  \nonumber \\
    &\leq C \rho_2(A)^k.
\end{align*}
By Corollary~\ref{cor:stability_bound}, this implies
\begin{align*}
    \| \boldsymbol{\pi}^k -\boldsymbol{\pi}\|_{\max} = O\!\bigl( \rho_2(A)^k\bigr)
\end{align*}
Since $x_i(0)\in[0,1]$ for all $i$, the same bound applies to the consensus values, completing the proof.
\end{proof}

\subsection{Proof of Proposition \ref{prop:perturbation_stability}}
\begin{proof}
    Regarding a small perturbation of \(A\), as \(A\) and \(B\) are stochastic matrices, we have
\begin{align*}
    \|BA^k - A^{k+1}\|_{\max} &=  \|(B-A)A^k \|_{\max} \nonumber \\& \leq  N\|(B-A) \|_{\max} \|A^k\|_{\max}. 
\end{align*}
By applying Corollary~\ref{cor:stability_bound} together with this inequality, we obtain the stated result. 
\end{proof}

\subsection{Additional Numerical Experiments}
\label{sec:numerical_results_app}


We further validate our theoretical findings through experiments on synthetic networks. These experiments demonstrate the key properties of multilayer consensus dynamics and confirm that multilayer networks can reach consensus even when individual layers fail to converge independently.

\subsubsection{Merged Layers Opinion Dynamics}

We compare Barabási–Albert (BA) networks with Erdős–Rényi (ER) networks. The BA model uses preferential attachment, which produces hub nodes with high degree, while the ER model connects each pair of nodes independently with probability \(p\). We adjust the parameters to ensure networks of size \(N=100\) and average degree 10. Initial opinions are drawn uniformly from \([0,1]\), except that the five highest-degree nodes in the BA networks are fixed at opinion 0 to model influential agents. Due to the concentration of connections around these hubs, the BA networks alone converge to a lower consensus level than the ER networks; however, the merged-layer mechanism alleviates this deficit by blending the two interaction structures. Figure~\ref{fig:merge_layer_consensus} illustrates the error convergence, consensus value, and SLEM as functions of the weighting factor $\alpha$. In particular, Figures~\ref{fig:merge_consensus_value} and \ref{fig:merge_slem} support the bounds in Propositions ~\ref{prop:consensus_bound} and ~\ref{prop:slem_bound}, respectively.

\begin{figure*}[t]
\centering
\subfloat[Error convergence for different values of \(\alpha\), where dashed lines indicate the theoretical rate and solid lines show the observed rate]{%
  \includegraphics[width=.32\textwidth]{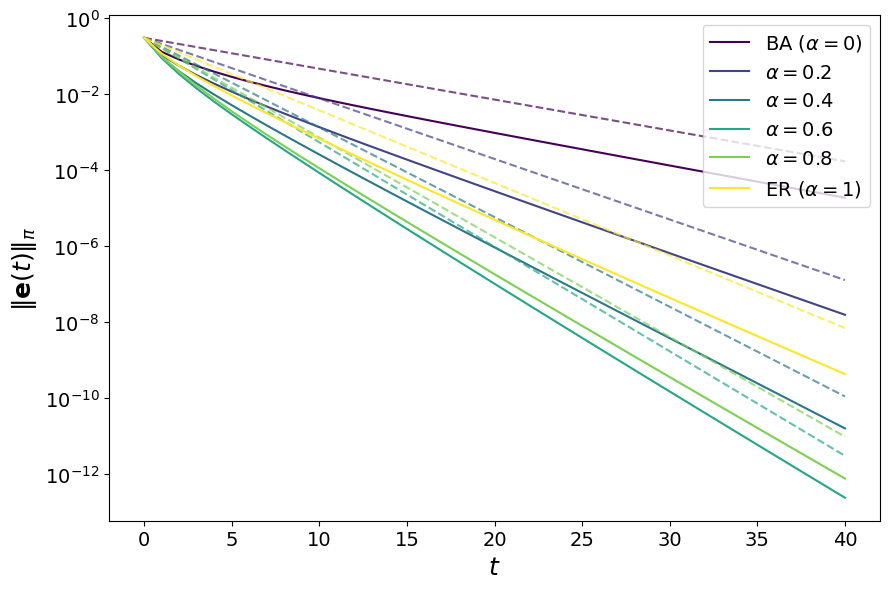}%
  \label{fig:merge_error_convergence}
}\hfill
\subfloat[Consensus value versus \(\alpha\), with red (upper bound) and green (lower bound) lines from Proposition~\ref{prop:consensus_bound}]{%
  \includegraphics[width=.32\textwidth]{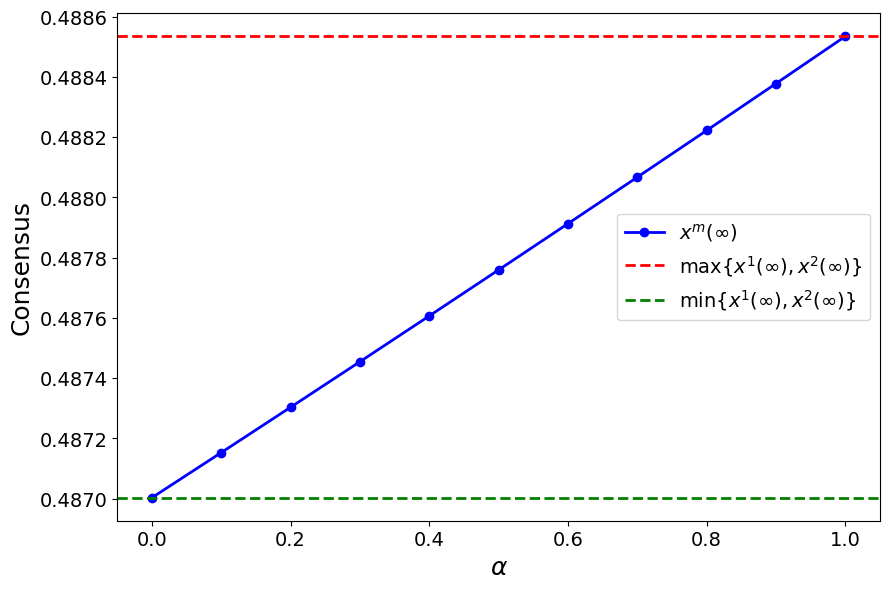}%
  \label{fig:merge_consensus_value}
}\hfill
\subfloat[SLEM versus \(\alpha\), with red (upper bound) and green (lower bound) lines from Proposition~\ref{prop:slem_bound}]{%
  \includegraphics[width=.32\textwidth]{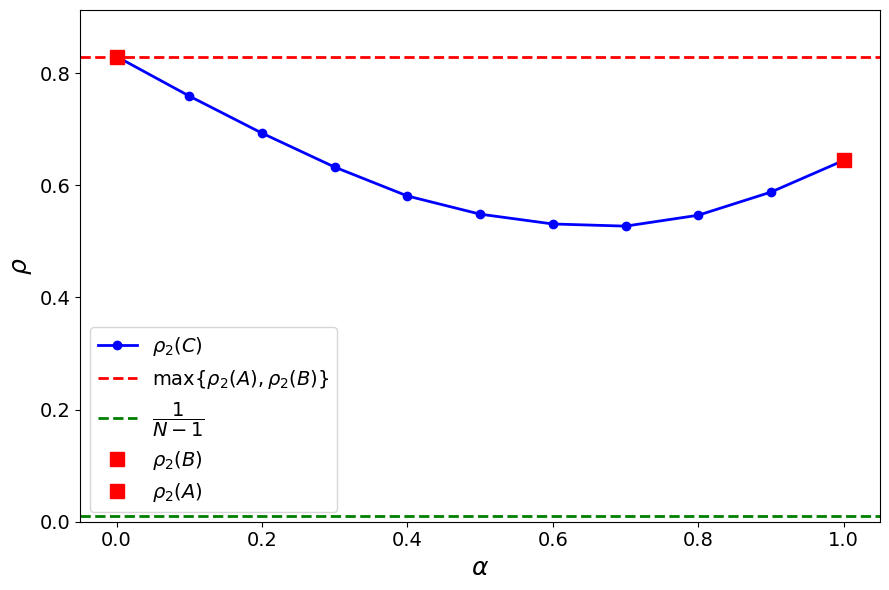}%
  \label{fig:merge_slem}
}
\caption{Merged-layer consensus experiments on random networks.}
\label{fig:merge_layer_consensus}
\end{figure*}

\subsubsection{Switching Layers Opinion Dynamics}

We construct layer \(A\) as a 6-regular graph and layer \(B\) as an 8-regular graph on \(N=100\) nodes. Figure~\ref{fig:switching_bounds} validates the convergence bound from Proposition~\ref{prop:switching_convergence}. The stepwise convergence reflects the periodic switching structure, and the overall rate depends on both \(\rho_2(B A^k)\) and the switching period \(k+1\).
\begin{figure*}[t]
\centering
\begin{subfigure}[b]{0.34\textwidth}
  \includegraphics[width=\linewidth]{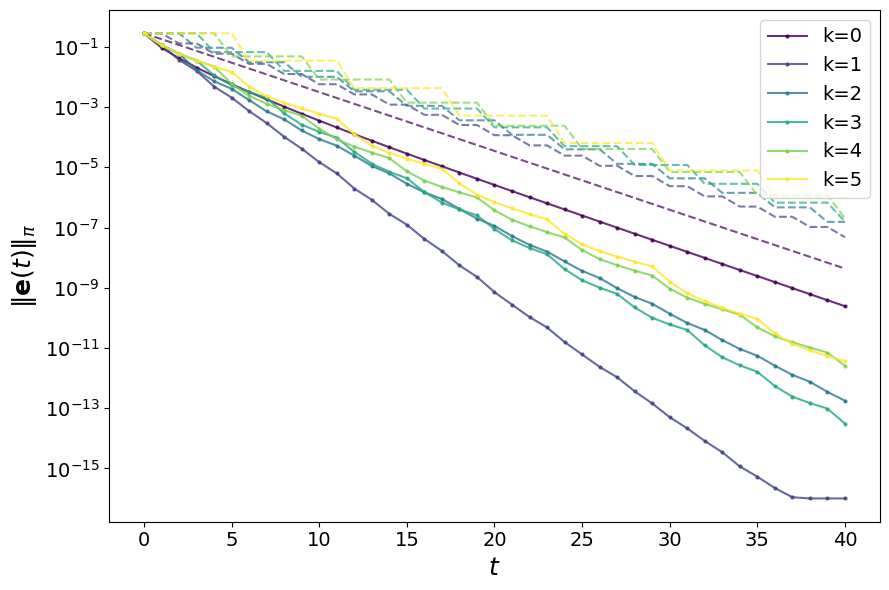}
  \caption{Error convergence for various switching periods \(k+1\), where dashed lines are theoretical and solid lines are empirical rates.}
  \label{fig:switching_bounds_a}
\end{subfigure}\hspace{1in}
\begin{subfigure}[b]{0.36\textwidth}
  \includegraphics[width=\linewidth]{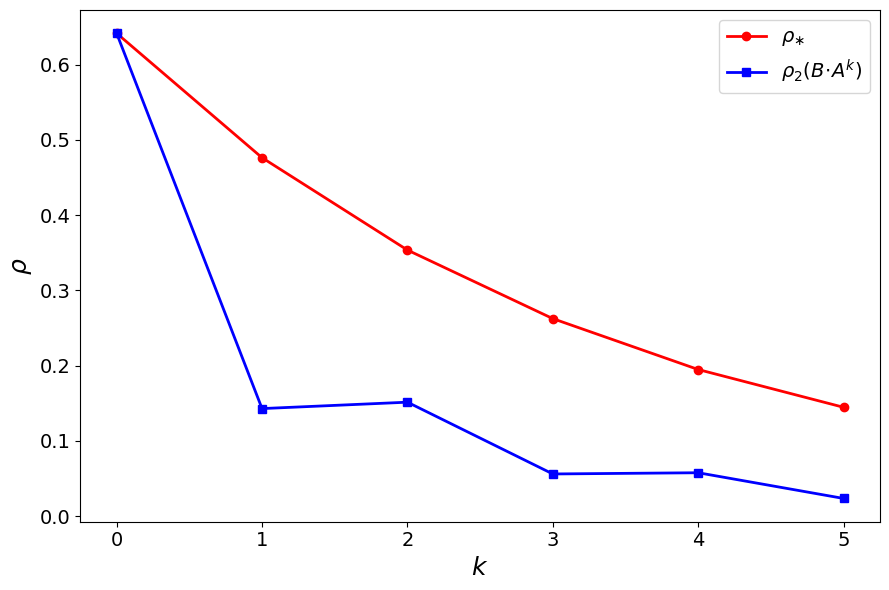}
  \caption{SLEM versus switching period \(k+1\), with upper bound from Proposition~\ref{prop:switching_convergence}}
  \label{fig:switching_bounds_b}
\end{subfigure}
\caption{Switching layer consensus experiments on random networks.}
\label{fig:switching_bounds}
\end{figure*}

\end{document}